\documentclass[11pt]{article}
\usepackage{amsmath,amssymb,bbm,amsthm}
\usepackage{fullpage}
\usepackage{thm-restate,color,xcolor,xspace}
\usepackage{hyperref,cleveref}
\usepackage{graphicx}
\usepackage{algorithm,algorithmic}

\newtheorem{theorem}{Theorem}
\newtheorem{lemma}[theorem]{Lemma}
\newtheorem{definition}[theorem]{Definition}
\newtheorem{corollary}[theorem]{Corollary}

\newcommand{\bu}{\mathbf u}
\newcommand{\bg}{\mathbf g}
\newcommand{\by}{\mathbf y}

\begin{document}

\title{Balancing Weights, Directed Sparsification, and Augmenting Paths}
\author{Jason Li\footnote{Carnegie Mellon University. \tt jmli@cs.cmu.edu}}
\date{\today}
\maketitle

\begin{abstract}
We present a randomized augmenting paths-based algorithm to compute the maximum flow in a directed, uncapacitated graph in almost $m+nF$ time, matching the algorithm of Karger and Levine for undirected graphs (SICOMP 2015).
Combined with an initial $\sqrt n$ rounds of blocking flow to reduce the value of $F$, we obtain a maximum flow algorithm with running time $mn^{1/2+o(1)}$. For combinatorial, augmenting paths-based algorithms, this is the first improvement over Dinic's algorithm for moderately sparse graphs.

To obtain our algorithm, we introduce a new technique to re-weight the edges of a strongly connected directed graph so that each cut is approximately balanced: the total weight of edges in one direction is within a constant factor of the total weight in the other direction. We then adapt Karger and Levine's technique of sampling edges from the newly weighted residual graph, ensuring that an augmenting path exists in the sampled graph with high probability. One technical difficulty is that our balancing weights have to be dynamically maintained upon changes to the residual graph. Surprisingly, we can black box the dynamic data structure from the recent interior point method-based flow algorithm of van den Brand et al.\ (FOCS 2024).
\end{abstract}

\section{Introduction}

In the (uncapacitated) maximum flow problem, we are given a directed graph and two vertices $s$ and $t$, and the goal is to find the maximum number of edge-disjoint paths from $s$ to $t$. It is one of the oldest problems in combinatorial optimization~\cite{dantzig1951application}, and many influential techniques have been developed to tackle the problem, including augmenting paths~\cite{ford1956maximal}, blocking flows~\cite{dinic1970algorithm}, push-relabel~\cite{goldberg1988new}, and random sampling~\cite{karger2015fast}. More recently, techniques based on continuous optimization have become widely successful, culminating in the first almost-linear time algorithm for maximum flow and its generalizations~\cite{chen2025maximum}. However, it remains an open question whether a more traditional approach based on augmenting paths can match this recent breakthrough. Two very recent lines of work have obtained almost quadratic-time algorithms for maximum flow~\cite{bernstein2024maximum,bernstein2025combinatorial} and bipartite matching~\cite{chuzhoy2024faster,chuzhoy2024maximum}, showing that combinatorial techniques still have room for future development.

In this work, we present a new approach based on sampling edges from an appropriately weighted residual graph. Our main result is the following:
\begin{theorem}\label{thm:main}
There is a randomized, augmenting paths-based algorithm that computes the maximum flow on a directed, uncapacitated graph in time $(m+nF)\cdot n^{o(1)}$, where $F$ is the value of the maximum flow.
\end{theorem}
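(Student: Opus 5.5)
We plan to follow the Karger--Levine paradigm: find each of the $F$ augmenting paths in time roughly $n\cdot n^{o(1)}$ amortized by running graph search on a random sparse subgraph of the residual graph rather than on the whole residual graph. The total cost is then $m$ for setup plus $F$ augmentations of cost $n^{1+o(1)}$ each. In undirected graphs, the sampling works because every cut of the residual graph with positive residual $s$--$t$ capacity has edges in both directions. In directed graphs a cut can be extremely lopsided, so uniform sampling may destroy every augmenting path. Our first step is therefore a balancing reweighting. For a strongly connected directed graph we will construct edge weights $w$ so that for every cut $(S,\bar S)$ the weight $w(S,\bar S)$ is within a constant (or $n^{o(1)}$) factor of $w(\bar S,S)$.

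To get the weights, we will take $w$ to be (a scaling of) a positive circulation that is bounded below on every edge. A circulation has exactly equal weight in both directions of every cut. To keep the weights usable for sampling, we will perturb it so that each edge has weight at least $1$ and total weight $W=\mathrm{poly}(n)\cdot m$ is controlled. We will compute this by routing, for each edge $(u,v)$, a unit of return flow from $v$ to $u$ along short paths, or more robustly as the solution of a min-cost / $\ell_p$-regularized flow problem that the van den Brand et al.\ dynamic IPM machinery can maintain. To make the residual graph strongly connected, we will add the edge $(t,s)$ with large capacity. With this done, a nonzero $s$--$t$ residual capacity implies that every $s$--$t$ cut carries forward weight comparable to the backward weight.

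The second step is sampling. We sample each residual edge $e$ with probability proportional to $w_e$, obtaining about $n^{1+o(1)}$ edges. Take any $s$--$t$ cut that separates $s$ from $t$ and has at least one residual edge. Balancing gives it forward weight comparable to its total weight. We then argue, via a Karger-style cut-counting bound adapted to the weighted directed setting, that with high probability some sampled forward edge survives in every cut whose weight is at least the sampling threshold. To handle light cuts, we will iterate over geometric scales of the sampling rate, doubling until a path is found. The charging argument is as in Karger--Levine: the cost of a search is proportional to the sample size needed, and summing over augmentations gives $nF\cdot n^{o(1)}$.

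The third step, which we expect to be the main obstacle, is dynamic maintenance. Each augmentation reverses the edges along a path, changing the residual graph, and the balancing weights must be updated without recomputation from scratch. We intend to black-box the van den Brand et al.\ data structure. It maintains an approximately optimal regularized flow, hence approximate balancing weights, under edge insertions and deletions in $n^{o(1)}$ amortized time per update. A path of length $\ell$ then costs $\ell\cdot n^{o(1)}$, and the total path length is bounded by $O(n)$ per augmentation. The delicate point is robustness. The data structure is adaptive and only approximate, so we must show that constant-factor approximate balance suffices for the sampling argument. We must also show that the adaptivity of updates, which depend on earlier random samples, does not break the high-probability guarantee. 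We would handle the latter by resampling fresh randomness for each search.
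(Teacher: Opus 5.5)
Your outline (reweight so cuts are balanced, sample, maintain the weights with the van den Brand et al.\ machinery) matches the paper's, but neither of the two load-bearing steps is supported.

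\textbf{Dynamic maintenance.} van den Brand et al.\ do not give a fully dynamic solver that maintains an approximately optimal regularized flow or circulation under arbitrary insertions and deletions in $n^{o(1)}$ amortized time. That would essentially be fully dynamic min-cost flow. What they provide is an approximate min-ratio \emph{cut} structure whose \textsc{ToggleCut} raises the potentials on one side of an implicit cut. A circulation is also hard to repair: after a path flip, the old values on the flipped arcs violate conservation and must be rerouted. More importantly, your plan has no bound on the total rebalancing work over $F$ augmentations, and that bound is the heart of the paper's proof.

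The paper defines weights from potentials, $w(u,v)=1/(\max\{\by(v)-\by(u),0\}+1)$. Imbalance then becomes a min-ratio cut with gradient $\bg=\sum w(u,v)(\mathbbm 1_u-\mathbbm 1_v)$, and ratio at least $-1/3$ gives $w(\partial^+S)\ge\frac12 w(\partial^-S)$. The algorithm toggles while $g/u\le-1/(3\alpha)$, and a stability lemma keeps the lazily updated weights within $1\pm\epsilon$. Toggles are counted with the energy $\sum_{(u,v)}\int_{\by(v)-\by(u)}^{M}\frac{dx}{\max\{x,0\}+1}$:
\begin{itemize}
\item each toggle lowers it by $\Omega(1/\alpha^3)$;
\item each path flip raises it by at most $\by(t)-\by(s)+(n-1)\ln(M+1)$;
\item the $m$ parallel copies of $(t,s)$ force $\by(t)-\by(s)\le O(\mathcal E/m)$.
\end{itemize}
Hence the total energy is $O((m+nF)\ln(M+1))$ and the potential spread stays polynomial. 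In your plan the $(t,s)$ arc serves only strong connectivity, and even that requires discarding the other strongly connected components.

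\textbf{Sampling.} Sampling arc $e$ with probability proportional to $w_e$ at one global rate does not give the guarantee, and no cut-counting bound relative to that rate exists. Consider the following example.
\begin{itemize}
\item A gadget hanging off $s$ forces most of $W$: $\Theta(m)$ parallel arcs $a\to b$ whose only return route is a path of length $\Theta(n)$. Every circulation with $w_e\ge1$ then has weight $\Omega(mn)$ there.
\item $s$ and $t$ are joined by $\lambda=L=\Theta(\sqrt n)$ internally disjoint paths of length $L$. A strictly convex regularizer spreads the $s$--$t$ flow evenly over them.
\end{itemize}
With $n^{1+o(1)}$ samples, each path arc survives with probability at most $n^{-1/2+o(1)}$, so whp no $s$--$t$ path survives. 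The culprit is the $L^{\lambda}$ equally heavy $s$--$t$ cuts, not light cuts. Raising the rate until a path appears costs about $n^{3/2}$ per augmentation. The Karger--Levine charging cannot absorb this, since it only pays for oversampling when the remaining flow is small.

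What is missing is sampling relative to local connectivity, maintained dynamically, since rebuilding costs $\Omega(m)$ per path. The paper buckets weights by powers of two. It maintains a recursive hierarchy of dynamic expander decompositions (Goranci et al., low recourse, adaptive adversary) with $m_i/n$ self-loops per vertex. Level-$i$ intra-cluster edges are sampled with probability $\frac{q\beta}{2t}\cdot\frac{n}{m_i}$. The union bound groups cuts $R$ of an expander on $U$ by $k=\min\{|R|,|U\setminus R|\}$: there are at most $2|U|^k$ such cuts, and each balanced one has at least $k\phi m_i/(\beta n)$ out-arcs.
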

Combined with an initial $\sqrt n$ rounds of blocking flow to ensure that $F\le m/\sqrt n$, we obtain the following corollary:
\begin{corollary}\label{cor:main}
There is a randomized, augmenting paths-based algorithm that computes the maximum flow on a directed, uncapacitated graph in time $mn^{1/2+o(1)}$. 
\end{corollary}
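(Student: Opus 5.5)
We derive \Cref{cor:main} from \Cref{thm:main} by first running blocking flows, as in the Even--Tarjan analysis of Dinic's algorithm on unit-capacity graphs. Concretely, we run $k=\lceil\sqrt n\,\rceil$ phases of Dinic's algorithm from the empty flow. Each phase builds the BFS level graph of the current residual graph and computes a blocking flow in it. Since the graph is uncapacitated, a blocking flow is found in $O(m)$ time by repeated DFS with dead-end deletion: every edge is either saturated, and then removed because its capacity is $1$, or retreated from once. So the $k$ phases cost $O(m\sqrt n)$ in total. The standard fact we use is that each phase strictly increases the $s$--$t$ distance in the residual graph. Hence after $k$ phases, the residual graph $G_f$ has $d:=\operatorname{dist}_{G_f}(s,t)\ge k\ge \sqrt n$, or else $t$ is already unreachable and we are done.

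Next we bound the residual flow value $F'$, meaning the maximum flow in $G_f$, which is exactly the amount of flow still to be found. Let $L_i$ be the set of vertices at BFS distance exactly $i$ from $s$ in $G_f$, for $0\le i<d$. For each such $i$, the set $S_i=\{v:\operatorname{dist}(s,v)\le i\}$ is an $s$--$t$ cut, and every residual edge leaving $S_i$ goes from $L_i$ to $L_{i+1}$. By max-flow/min-cut, $F'$ is at most the number of residual edges crossing each such layer boundary. These boundaries are disjoint edge sets, and $G_f$ has at most $2m$ edges (each original edge contributes at most itself and its reverse). Therefore
$$d\cdot F'\le 2m,\qquad\text{so}\qquad F'\le 2m/\sqrt n.$$

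We then invoke \Cref{thm:main} on the residual graph $G_f$, viewed as a directed uncapacitated graph with at most $2m$ edges and $n$ vertices. If parallel or antiparallel edges are a concern, each edge can be subdivided. That would increase $n$, though, so it is cleaner to note that \Cref{thm:main} handles multigraphs, or to keep multiplicities explicit. The theorem finds a maximum flow $f'$ in $G_f$ in time $(2m+n\cdot 2m/\sqrt n)\,n^{o(1)} = m\sqrt n\, n^{o(1)}$. Then $f+f'$, after cancelling flow on antiparallel pairs, is a maximum flow in $G$. Both stages are augmenting-path based, since a blocking flow is a batch of augmenting paths. The total running time is $O(m\sqrt n)+mn^{1/2+o(1)}=mn^{1/2+o(1)}$.

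The only step needing care is the layering bound $F'\le 2m/d$, together with the claim that Dinic phases increase the distance. Both are classical, so we expect no real obstacle. The one subtlety is making sure that \Cref{thm:main} applies to residual graphs, which may contain antiparallel edge pairs. We would handle this by running the theorem's algorithm on the original graph warm-started from the flow $f$, whose remaining work is governed by $F'$. Alternatively we would check that the theorem's statement permits multigraphs.
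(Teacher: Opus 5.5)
Your proposal is correct and follows the paper's proof: run $\sqrt n$ phases of Dinic's blocking flow in $O(m\sqrt n)$ time, bound the remaining flow by $O(m/\sqrt n)$, and apply \Cref{thm:main} to the residual graph. The only differences are that you prove the Karzanov--Even--Tarjan layering bound inline where the paper cites it, and that your $2m$ is slightly loose, since in this unit-capacity model the residual graph has exactly $m$ arcs. Your multigraph concern is moot, because the paper's own preprocessing already adds parallel $(t,s)$ arcs, so \Cref{thm:main} applies to multigraphs.
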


For augmenting paths-based algorithms, this is the first improvement over Dinic's algorithm \cite{dinic1970algorithm} for moderately sparse graphs, which runs in time $O(m\cdot\min\{m^{1/2},n^{2/3}\})$ as analyzed by Karzanov~\cite{karzanov1973finding} and Even and Tarjan~\cite{even1975network}.

\subsection{Our Techniques}

Our starting point is Karger and Levine's algorithm on \emph{undirected} graphs~\cite{karger2015fast}, which finds each augmenting path in a random sample of the residual graph. At a high level, Karger and Levine show that if the algorithm has sent a $(1-\epsilon)$-fraction of the maximum flow, then the residual graph is $O(1/\epsilon)$-balanced in the following sense: for any cut, viewed as a bipartition $(S,V\setminus S)$ of the vertex set, the number of edges going from $S$ to $V\setminus S$ is within a multiplicative $O(1/\epsilon)$-factor of the number of edges going from $V\setminus S$ to $S$. Then, they employ the random sampling method of Bencz\'ur and Karger for undirected graphs~\cite{benczur2015randomized}, accounting for the balance of $O(1/\epsilon)$ by over-sampling at a rate of $O(1/\epsilon)$. Overall, they show that sampling $\tilde O(n/\epsilon)$\footnote{$\tilde O(\cdot)$ ignores factors poly-logarithmic in $n$.} edges is sufficient with high probability\footnote{\emph{With high probability} means with probability $1-1/n^c$ for arbitrarily large constant $c$.}, so the next augmenting path can be found in time $\tilde{O}(n/\epsilon)$. Since most of the augmenting paths are found when $\epsilon$ is still constant, the amortized running time per augmenting path is $\tilde{O}(n)$, establishing their running time of $\tilde{O}(m+nF)$.

Unfortunately, the balance guarantee no longer holds if the input graph is directed. For example, suppose the input graph has an unbalanced cut $(S,V\setminus S)$ with $s\in S$ and $t\in V\setminus S$, with only a single edge going from $S$ to $V\setminus S$ but many edges in the opposite direction. Since Karger and Levine's algorithm treats the residual graph as undirected for the purpose of sampling, the single edge from $S$ to $V\setminus S$ cannot be distinguished from the edges in the opposite direction, and is therefore unlikely to be sampled. If the edge is not sampled, then there cannot be a path from $s$ to $t$, invalidating the approach.

Our main idea is to re-weight the residual graph so that any cut is balanced in terms of weight: the total weight of edges in one direction is within a constant factor of the total weight in the other direction. In the example above, the single edge from $S$ to $V\setminus S$ must have much higher weight than the edges in the opposite direction. We then treat the graph as undirected but \emph{weighted} when sampling edges, so that the single edge from $S$ to $V\setminus S$ is likely to be sampled.

One technical difficulty is that our balancing weights have to be dynamically maintained upon changes to the residual graph. More precisely, when an augmenting path is found, the residual graph is updated by flipping each edge along the path. Since Karger and Levine treat the graph as undirected for the purpose of sampling, flipping an edge does not change their sampling probabilities. However, our balancing weights are sensitive to edge flips, and a dynamic algorithm is required to maintain them. Surprisingly, we show that the \emph{minimum ratio cut} framework from the recent interior point method-based flow algorithm of van den Brand et al.~\cite{van2024almost} can be adapted to compute balancing weights. In particular, we completely black box their data structure for maintaining approximate minimum ratio cuts. We remark that this data structure is purely combinatorial and is comparable in complexity to the data structures in~\cite{chuzhoy2024faster,chuzhoy2024maximum}.

\subsection{Related Work}

The concept of \emph{balance} for directed graphs is introduced by Ene et al.~\cite{ene2016routing}, who show that existing techniques for undirected flow-based problems can be translated to directed graphs that are approximately balanced. For graphs that are \emph{perfectly} balanced, namely the Eulerian graphs, Cohen et al.~\cite{cohen2018solving} use directed graph sparsification to obtain fast algorithms for solving directed Laplacian systems. Sparsification for approximately balanced graphs is systematically studied by Cen et al.~\cite{cen2021sparsification}, who obtain tight upper and lower bounds for various types of sparsification. All of these results are restricted to balanced graphs, and to the best of our knowledge, our work is the first to employ techniques from balanced graphs in the context of general directed graphs.

\subsection{Preliminaries}

Throughout the paper, all graphs are uncapacitated, either weighted or unweighted, and either directed or undirected. We refer to directed edges as \emph{arcs} and undirected edges as \emph{edges}. The input directed graph to maximum flow is assumed to be $G=(V,E)$, with vertices $s,t\in V$. We define a \emph{flow} from $s$ to $t$ as a set of arc-disjoint paths from $s$ to $t$ in $G$, and the \emph{value} of the flow is the number of paths in the set. The \emph{residual graph} $G_f$ is obtained from $G$ by flipping each arc along each path in $f$. The standard augmenting paths framework attempts to find a path from $s$ to $t$ in $G_f$, called an \emph{augmenting path}. If a path is found, then each arc along the path is flipped, and the flow $f$ is updated accordingly. Otherwise, if no path exists, then $f$ is a maximum flow, and the algorithm terminates.

A \emph{cut} is a bipartition $(S,V\setminus S)$ of the vertex set into two non-empty subsets, represented as one of the sides $S\subseteq V$. The \emph{boundary} $\partial S$ of the cut is the set of arcs/edges with exactly one endpoint in $S$. For a directed graph, the out-boundary $\partial^+S$ is the set of arcs with only the tail in $S$, and the in-boundary $\partial^-S$ is the set of arcs with only the head in $S$. For a weighted graph with weight function $w$, let $w(\partial S)$ denote the total weight of arcs/edges in $\partial S$, and define $w(\partial^+S)$ and $w(\partial^-S)$ similarly. For an unweighted graph, let $|\partial S|$ denote the number of arcs/edges in $\partial S$, and define $|\partial^+S|$ and $|\partial^-S|$ similarly.

\subsection{Preprocessing the Graph}

We first preprocess the graph as follows. First, add $m$ many parallel arcs $(t,s)$, which does not affect the maximum flow from $s$ to $t$. Then, compute the strongly connected components of the new graph. If $s$ and $t$ do not belong to the same strongly connected component, then the existence of arc $(t,s)$ implies that there is no path from $s$ to $t$, and the algorithm concludes that the maximum flow from $s$ to $t$ has value $0$. Otherwise, discard all strongly connected components except the one containing $s$ and $t$, and solve the maximum flow problem on the remaining component. This operation is safe because there is no path from $s$ to $t$ that contains vertices outside the strongly connected component containing $s$ and $t$.

Let $G$ be the resulting graph after preprocessing. The lemma below ensures that any residual graph $G_f$ remains strongly connected as long as the current flow is not maximum.

\begin{lemma}
Let $f$ be a flow from $s$ to $t$ in $G$ that does not use any arc $(t,s)$. If there is a path from $s$ to $t$ in $G_f$, then $G_f$ is strongly connected.
\end{lemma}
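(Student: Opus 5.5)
The plan is to show that every vertex $v$ can reach $s$ in $G_f$ and that $s$ can reach every vertex $v$ in $G_f$; strong connectivity then follows. I will use two facts. First, $G$ itself is strongly connected after preprocessing, since we kept only the strongly connected component containing $s$ and $t$. Second, $G_f$ still contains all of the (at least one, in fact $m$) arcs $(t,s)$, because $f$ does not use them. By hypothesis, $G_f$ also contains a path $P$ from $s$ to $t$. Together with an arc $(t,s)$, this gives a closed walk through $s$ and $t$, so $s$ and $t$ lie in the same strongly connected component $C$ of $G_f$.

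Next I will show that the component $C$ absorbs everything. Consider any arc $(u,v)$ of $G$. If $f$ does not use it, then $(u,v)$ is in $G_f$. If $f$ uses it, then it lies on some flow path $Q$ from $s$ to $t$, and $G_f$ contains the reversed arc $(v,u)$. I claim that in that case $G_f$ still has a path from $u$ to $v$. Following the reversed path $Q$ in $G_f$ gives $u \to \dots \to s$, since the reversed prefix of $Q$ runs from $u$ back to $s$. The arcs $(t,s)$ together with $P$ give the connection $s \leadsto t$. Following the reversed suffix of $Q$ gives $t \to \dots \to v$, since the suffix of $Q$ runs from $v$ to $t$. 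So $u \leadsto s \leadsto t \leadsto v$ in $G_f$. These reversed arcs exist in $G_f$ because each arc of $Q$ is flipped and the paths of $f$ are arc-disjoint, so no arc is flipped twice.

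It follows that for every arc $(u,v)$ of $G$, $v$ is reachable from $u$ in $G_f$. Since $G$ is strongly connected, for any pair $x,y$ there is an $x$–$y$ path in $G$, and concatenating the $G_f$-connections along its arcs gives an $x$–$y$ walk in $G_f$. Hence $G_f$ is strongly connected.

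The only subtle step is the case of a flipped arc. The key point there is that the reversed flow path passes through $s$ and $t$, and the path $P$ is what lets us cross from $s$ to $t$. Some care is also needed if a flow path revisits vertices, but walks suffice for reachability, so this does not cause a problem.
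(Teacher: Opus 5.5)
Your proof is correct, but it takes a different route from the paper. The paper argues through cuts. It shows directly that every $S$ has $\partial^+_{G_f}S\ne\emptyset$, with three cases according to where $s$ and $t$ lie:
\begin{itemize}
\item if $t\in S$ and $s\notin S$, the arcs $(t,s)$ supply an out-arc;
\item if $s\in S$ and $t\notin S$, the path $P$ must cross $S$;
\item if $S$ does not separate $s$ and $t$, flow conservation gives $|\partial^+_{G_f}S|=|\partial^+_GS|\ge1$.
\end{itemize}

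You argue through reachability instead. You simulate each arc $(u,v)$ of $G$ by a walk in $G_f$. If the arc is unflipped, the walk is the arc itself. Otherwise you go back along the flipped prefix of its flow path to $s$, then along $P$ to $t$, then back along the flipped suffix to $v$. The strong connectivity of $G$ then transfers to $G_f$.

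What your route buys: it is constructive and involves no counting. It also never uses the arcs $(t,s)$ in $G_f$, so it does not need the hypothesis that $f$ avoids them. Your first paragraph's observation that $s$ and $t$ share a strongly connected component is not used later. Likewise, the connection from $s$ to $t$ comes from $P$ alone, not from "the arcs $(t,s)$ together with $P$."

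What the paper's route buys: it is shorter and stays in the language of cuts used throughout the paper. It also gives the exact equality $|\partial^+_{G_f}S|=|\partial^+_GS|$ for every cut not separating $s$ and $t$.
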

\begin{proof}
It suffices to show that $\partial^+_{G_f}S\ne\emptyset$ for any cut $S\subseteq V$. By assumption, all arcs $(t,s)$ in $G$ remain in $G_f$, and there is a path from $s$ to $t$ in $G_f$. If $t\in S$ and $s\notin S$, then the arcs $(t,s)$ are in $\partial^+_{G_f}S$, so $\partial^+_{G_f}S\ne\emptyset$. If $s\in S$ and $t\notin S$, then a path from $s$ to $t$ in $G_f$ must contain an edge in $\partial^+_{G_f}S$, so $\partial^+_{G_f}S\ne\emptyset$. Finally, if $s,t\in S$ or $s,t\notin S$, then by the conservation of flow, we have $|\partial^+_{G_f}S|=|\partial^+_GS|$, and since $G$ is strongly connected by preprocessing, we conclude that $|\partial^+_{G_f}S|=|\partial^+_GS|\ne0$.
\end{proof}

Our algorithm always finds augmenting paths that avoid the arcs $(t,s)$, so while the current flow is not maximum, we may assume that $G_f$ is strongly connected.

\section{Balancing Weights}

In this section, we construct our balancing weights on the residual graph and describe a data structure to maintain them over the course of the algorithm. The weights assigned to the arcs are determined by a potential function $\by:V\to\mathbb R$ on the vertices. For a given potential function $\by$, the weight of arc $(u,v)$ is
\[ w(u,v)=\frac1{\max\{\by(v)-\by(u),\,0\}+1} .\]
In particular, we have $w(u,v)=\frac1{\by(v)-\by(u)+1}$ whenever $\by(v)\ge\by(u)$, and $w(u,v)=1$ whenever $\by(v)\le\by(u)$.

The algorithm tries to find a potential function $\by:V\to\mathbb R$ such that all cuts are approximately balanced in the two directions: for any set $S\subseteq V$, we require that $\frac12w(\partial^-S)\le w(\partial^+S)\le2w(\partial^-S)$. Actually, we only need to enforce $\frac12w(\partial^-S)\le w(\partial^+S)$ because the other inequality follows by replacing $S$ with $V\setminus S$. If there is a violating set $S$ with $w(\partial^+S)<\frac12w(\partial^-S)$, then the algorithm increases the potential slightly for all vertices in $S$, i.e., sets $\by(v)\gets\by(v)+\Delta$ for all $v\in S$, where $\Delta>0$ is small enough. By a potential function argument, this process must terminate eventually. In particular, a balancing function $\by$ exists, which is already a non-trivial statement.

To ensure a fast algorithm, we need to find a balancing function $\by$ quickly, and moreover, dynamically maintain the function upon finding augmenting paths. We interpret the balance of a directed graph as the \emph{minimum ratio cut} of the underlying \emph{undirected} graph, and then use the dynamic minimum ratio cut data structure from a recent interior-point method-based flow algorithm~\cite{van2024almost}.

We first define the minimum ratio cut of an undirected graph, whose input includes a \emph{gradient function} $\bg:V\to\mathbb R$ with $\bg\perp\mathbbm 1$, i.e., $\sum_{v\in V}\bg(v)=0$.

\begin{definition}
The \emph{minimum ratio cut} of a positively weighted, undirected graph with gradient function $\bg\perp\mathbbm 1$ is the minimum ratio $\bg(S)/w(\partial S)$ over all $S\subseteq V$ with $w(\partial S)\ne0$.
\end{definition}

Observe that the requirement $\bg\perp\mathbbm 1$ means that $\bg(S)+\bg(V\setminus S)=0$ for any $S\subseteq V$, so either $S$ or $V\setminus S$ has non-positive ratio. Intuitively, the goal is to maximize $|\bg(S)|$ while minimizing $w(\partial S)$, and then take either $S$ or $V\setminus S$, whichever has non-positive ratio.

We construct our minimum ratio cut instance as follows. Consider the \emph{undirected} graph with edge weights $w(u,v)$ together with the gradient function $\bg:V\to\mathbb R$ defined as
\[ \bg=\sum_{(u,v)}w(u,v)\cdot(\mathbbm 1_u-\mathbbm 1_v) ,\]
i.e., for each arc $(u,v)$ in the residual graph, add $w(u,v)$ to the tail $u$ and subtract $w(u,v)$ from the head $v$. In other words, we ``undirect'' the residual graph and encode the arc directions through a gradient function.

\begin{lemma}
For any set $S\subseteq V$, we have $\bg(S)=w(\partial^+S)-w(\partial^-S)$.
\end{lemma}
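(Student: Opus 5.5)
The plan is to expand $\bg(S)$ by linearity over the arcs of the residual graph and classify each arc according to where its endpoints lie relative to $S$. Write $\bg(S)=\sum_{x\in S}\bg(x)=\langle \bg,\mathbbm 1_S\rangle$. By the definition of $\bg$ this becomes
\[ \bg(S)=\sum_{(u,v)} w(u,v)\,\bigl(\mathbbm 1_S(u)-\mathbbm 1_S(v)\bigr), \]
so it suffices to determine, for each arc, the value of the bracket $\mathbbm 1_S(u)-\mathbbm 1_S(v)$.

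There are four cases. If both $u,v\in S$ or both $u,v\notin S$, the bracket is $0$, so the arc contributes nothing. If $u\in S$ and $v\notin S$, the bracket is $+1$; these are exactly the arcs of $\partial^+S$ (only the tail in $S$). If $u\notin S$ and $v\in S$, the bracket is $-1$; these are exactly the arcs of $\partial^-S$ (only the head in $S$). Summing the contributions gives $\bg(S)=w(\partial^+S)-w(\partial^-S)$.

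I do not expect a genuine obstacle. The one point needing care is that parallel arcs (for instance the $m$ copies of $(t,s)$) and arcs produced by flipping are each counted separately in both $\bg$ and the boundary weights. This is automatic, since both sums range over the same multiset of arcs. Self-loops, if any exist, have $u=v$ and so contribute zero to both sides.
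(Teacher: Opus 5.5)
Your proof is correct and is essentially the paper's argument. The paper also checks that each arc of $\partial^+S$ contributes $+w(u,v)$, each arc of $\partial^-S$ contributes $-w(u,v)$, and arcs with both or neither endpoint in $S$ contribute zero; writing $\bg(S)=\sum_{(u,v)} w(u,v)\,(\mathbbm 1_S(u)-\mathbbm 1_S(v))$ simply makes that case check explicit.
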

\begin{proof}
Any arc $(u,v)\in\partial^+S$ contributes $w(u,v)$ to $\bg(S)$ since we add $w(u,v)$ to the tail $u\in S$, while the subtraction at the head $v\notin S$ is not counted in $\bg(S)$. Similarly, any arc $(u,v)\in\partial^-S$ contributes $-w(u,v)$ to $\bg(S)$. All other arcs contribute $0$ since either $u,v\notin S$ in which case there is no contribution, or $u,v\in S$ in which case the contributions of $+w(u,v)$ and $-w(u,v)$ cancel out.
\end{proof}

\begin{lemma}\label{lem:approx-balanced}
If the minimum ratio cut has ratio $\bg(S)/w(\partial S)\ge-1/3$, then $w(\partial^+S)\ge\frac12w(\partial^-S)$ for all $S\subseteq V$.
\end{lemma}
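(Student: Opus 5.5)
The plan is a direct algebraic argument that combines the previous lemma, $\bg(S)=w(\partial^+S)-w(\partial^-S)$, with the decomposition $w(\partial S)=w(\partial^+S)+w(\partial^-S)$. The latter holds because every arc with exactly one endpoint in $S$ has either its tail in $S$ or its head in $S$, but not both. In the undirected graph it becomes an edge of $\partial S$ with the same weight $w(u,v)$.

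Fix any $S\subseteq V$ and abbreviate $a=w(\partial^+S)$ and $b=w(\partial^-S)$, both nonnegative. There are two cases.

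\textbf{Degenerate case.} If $w(\partial S)=0$, then $a=b=0$, since all weights are positive. The claim $a\ge\frac12 b$ then holds trivially. Note that $S=\emptyset$ or $S=V$ falls into this case as well.

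\textbf{Main case.} Otherwise $w(\partial S)>0$, so $S$ is among the sets over which the minimum ratio cut is taken. The hypothesis then gives
\[ \frac{a-b}{a+b}\ge-\frac13 . \]
Multiplying through by $a+b>0$ gives $3(a-b)\ge-(a+b)$. This rearranges to $4a\ge 2b$, that is, $a\ge\frac12 b$, which is exactly $w(\partial^+S)\ge\frac12w(\partial^-S)$.

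Interpreting the hypothesis as a bound on the \emph{minimum} ratio, hence on every admissible $S$, completes the proof. No step is a real obstacle. The only point needing care is the degenerate case $w(\partial S)=0$, which the definition of the minimum ratio cut excludes and which is handled separately above.
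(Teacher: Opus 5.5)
Your proof is correct and essentially the paper's argument: the paper proves the contrapositive, showing that a cut with $w(\partial^+S)<\frac12 w(\partial^-S)$ gives $3\bg(S)+w(\partial S)<0$. It uses the same identities $\bg(S)=w(\partial^+S)-w(\partial^-S)$ and $w(\partial S)=w(\partial^+S)+w(\partial^-S)$ and the same algebra, and in that form your degenerate case $w(\partial S)=0$ is excluded automatically because $w(\partial^-S)>0$.
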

\begin{proof}
It suffices to show that a cut $S\subseteq V$ with $w(\partial^+S)<\frac12 w(\partial^-S)$ implies that $\bg(S)/w(\partial S)<-1/3$. By the previous lemma, we have $\bg(S)=w(\partial^+S)-w(\partial^-S)$. Together with $w(\partial S)=w(\partial^+S)+w(\partial^-S)$, we obtain
\begin{align*}
3\cdot\bg(S)+w(\partial S)&=3\cdot(w(\partial^+S)-w(\partial^-S))+(w(\partial^+S)+w(\partial^-S))
\\&=4\cdot w(\partial^+S)-2\cdot w(\partial^-S)
\\&<2\cdot w(\partial^-S)-2\cdot w(\partial^-S)
\\&=0,
\end{align*}
or equivalently $\bg(S)/w(\partial S)<-1/3$, as desired.
\end{proof}

\subsection{Dynamic Data Structure}

To compute and maintain our balancing weights, we use the following data structure defined in~\cite{van2024almost}:
\begin{definition}[$\alpha$-approximate min-ratio cut data structure, Definition 5.2 of~\cite{van2024almost}]
For a weighted undirected graph $H=(V,E,\bu)$ where $\bu\in\mathbb R^E$ and $\bu(e)\in[1/U,U]$\footnote{The only difference is that~\cite{van2024almost} enforces $\bu(e)\in[1,U]$ while we require the weaker $\bu(e)\in[1/U,U]$. However, we can scale all $\bu(e)$ and $\bg(v)$ by $U$, giving $\bu(e)\in[1,U^2]$ and $\bg(v)\in[-U^2,U^2]$, which preserves all ratios and still satisfies $\log U^2=\tilde{O}(1)$. We also scale the initial $\by$ by $1/U$ and apply their data structure with each $\eta$ in \textsc{ToggleCut}$(\eta)$ scaled by $1/U$, so that the value $\bu(e)(\by(u)-\by(v))$ remains unchanged. Assuming the original constraint $\bu(e)\in[1,U]$ and tracking the scaling by $U$ is cumbersome, so we instead assume the weaker $\bu(e)\in[1/U,U]$ directly.} for $\log U=\tilde{O}(1)$, an initial gradient $\bg\in\mathbb R^V$ where $\bg\bot\mathbbm 1$ and $\bg(v)\in[-U,U]$, an initial potential vector $\by\in\mathbb R^V$, and a detection threshold parameter $\epsilon$, an $\alpha$-approximate min-ratio cut data structure $\mathcal D$ implicitly maintains the potential vector $\by$ and supports the following operations:
 \begin{itemize}
 \item $\textsc{InsertEdge}(e), \textsc{DeleteEdge}(e)$: inserts/deletes the edge $e$ with weight $\bu(e)$.
 \item $\textsc{UpdateGradient}(u,v,\delta)$: updates $\bg(u)\gets\bg(u)+\delta$ and $\bg(v)\gets\bg(v)-\delta$.
 \item $\textsc{Potential}(v)$: returns $\by(v)$.
 \end{itemize}

The data structure maintains a tuple $(g,u)$ where $g\le0$ and $u>0$ such that for some implicit cut $C\subseteq V$ we have $\bg(C)=g$ and $\bu(\partial C)\le u$, and the (negative) ratio $g/u$ is at most $1/\alpha$ times the minimum possible over all $C\subseteq V$.

The data structure has an additional operation \textsc{ToggleCut}$(\eta)$ based on the current tuple $(g,u)$ with implicit cut $C$: given a parameter $0<\eta\le1/u$, the data structure implicitly updates $\by(v)\gets\by(v)+\eta$ for all $v\in C$. Then, the data structure returns some edge set $E'$ that includes every edge $e=\{u,v\}\in E$ for which the value $\bu(e)(\by(u)-\by(v))$ has changed by at least an additive $\epsilon$ since it was inserted/last returned in $E'$.
\end{definition}

\begin{theorem}[Theorem 5.4 of~\cite{van2024almost}]\label{thm:min-ratio-cut-data-structure}
There is a deterministic $\alpha$-approximate min-ratio cut data structure for $\alpha=n^{o(1)}$ such that every operation is processed in amortized time $n^{o(1)}\log U$. Furthermore, the total number of edges returned by the algorithm after $t$ calls to \textsc{ToggleCut}$(\cdot)$ is at most $n^{o(1)}\cdot t/\epsilon$.
\end{theorem}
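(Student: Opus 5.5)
The statement just above, Theorem~\ref{thm:min-ratio-cut-data-structure}, is a black-box import (Theorem 5.4 of~\cite{van2024almost}). My plan is therefore not to reprove the data structure from scratch. Instead I would verify that our use of it fits their setting, and reduce to their proof wherever the settings differ.

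\textbf{Step 1: reduce to their weight range.} The only formal discrepancy is our weaker requirement $\bu(e)\in[1/U,U]$ in place of $[1,U]$. I would apply the rescaling described in the footnote:
\begin{itemize}
\item multiply every $\bu(e)$ and every $\bg(v)$ by $U$;
\item divide the initial $\by$ by $U$;
\item divide every $\eta$ passed to \textsc{ToggleCut} by $U$.
\end{itemize}
I would then check three invariants.
\begin{itemize}
\item Every ratio $\bg(C)/\bu(\partial C)$ is unchanged, so $\alpha$-approximation transfers verbatim.
\item The quantity $\bu(e)(\by(u)-\by(v))$ is unchanged, so the $\epsilon$-detection guarantee, and hence the bound on returned edges, is identical.
\item The constraint $0<\eta\le 1/u$ becomes $\eta/U\le 1/(Uu)$. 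This is exactly the constraint for the scaled capacity bound $Uu$.
\end{itemize}
Finally, since $\log(U^2)=2\log U=\tilde O(1)$, the amortized cost $n^{o(1)}\log U$ changes only by a constant factor.

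\textbf{Step 2: recall the internal argument, as a sanity check.} The data structure of~\cite{van2024almost} maintains a hierarchy of dynamic low-stretch tree / spanner-based sparsifiers of depth $o(\log n)$. Each level reduces min-ratio cut to tree-like instances where cuts are single tree edges (or vertex subtrees), approximated within $n^{o(1)}$ per level. Potentials are stored lazily as subtree additions, so \textsc{ToggleCut} costs $n^{o(1)}$. The edge-reporting bound follows from a charging argument. Each toggle changes $\sum_e \bu(e)\,|\Delta(\by(u)-\by(v))|$ by at most $\eta\cdot\bu(\partial C)\le 1$, up to the $n^{o(1)}$ distortion of the tree embedding. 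An edge is reported only once its accumulated change reaches $\epsilon$, which gives at most $n^{o(1)}t/\epsilon$ reports.

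\textbf{Main obstacle.} The main obstacle, were one to reprove the theorem, is maintaining approximate cut quality under adversarial updates while keeping $n^{o(1)}$ recourse across levels. That is precisely the content of~\cite{van2024almost}, which I would cite rather than reconstruct.
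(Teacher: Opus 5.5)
Your proposal matches the paper. The paper gives no internal proof: it imports Theorem~5.4 of van den Brand et al.\ as a black box, and the only adaptation is the footnote's rescaling (multiply $\bu$ and $\bg$ by $U$, divide $\by$ and each $\eta$ by $U$), which is exactly your Step~1 with the invariants checked correctly. Your Step~2 is not needed and its account of the data structure's internals is only impressionistic, so treat it as motivation rather than part of the argument.
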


\section{Maximum Flow Algorithm}

In this section, we present our maximum flow algorithm. We assume access to a dynamic \emph{sparsifier} data structure which is the focus of the next section (\Cref{sec:dynamic-sparsification}).

\begin{restatable}[Dynamic sparsifier with quality $q$]{definition}{SparsifierDefinition}
For a weighted directed graph $H=(V,E,\bu)$ where $\bu\in\mathbb R^E$ and $\bu(e)\in[1/U,U]$ for $\log U=\tilde{O}(1)$, a dynamic sparsifier with quality $q$ supports the following operations:
 \begin{itemize}
 \item $\textsc{InsertArc}(e), \textsc{DeleteArc}(e)$: inserts/deletes the arc $e$ with weight $\bu(e)$.
 \item $\textsc{Sparsifier}(\beta)$: given $\beta\ge1$, returns a subgraph $\widetilde H\subseteq H$ on at most $q\cdot\beta n$ arcs such that with high probability, any cut $S\subseteq V$ with $\bu(\partial_H^+S)\ge\frac1\beta \bu(\partial_HS)$ has at least one sampled out-arc, i.e., $\partial^+_{\widetilde H}S\ne\emptyset$.
 \end{itemize}
\end{restatable}

\begin{restatable}{theorem}{SparsifierTheorem}\label{thm:sparsifier-data-structure}
There is a randomized dynamic sparsifier data structure with quality $q=n^{o(1)}\log U$ such that each call to $\textsc{InsertArc}(e), \textsc{DeleteArc}(e)$ is processed in amortized time $n^{o(1)}\log U$, and each call to $\textsc{Sparsifier}(\beta)$ takes time $O(q\beta n\log U)$.
\end{restatable}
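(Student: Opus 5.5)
We will build the data structure from the classical undirected sparsification machinery of Bencz\'ur and Karger, applied to the underlying undirected weighted graph of $H$. The key observation is that the guarantee we need is one-sided and only concerns cuts whose out-boundary carries a $1/\beta$ fraction of the total boundary weight. So it suffices to sample arcs with probability proportional to $\beta\cdot\bu(e)$ divided by (an estimate of) the \emph{strength} or \emph{connectivity} of $e$ in the undirected weighted graph. Concretely, we will maintain a dynamic decomposition of $H$ into $O(\log(nU))$ weight classes, rounding $\bu(e)$ to a power of $2$ in $[1/U,U]$. Within each class we maintain a nested sequence of forest packings, or a dynamic spanner/expander-hierarchy style certificate, that assigns each edge an importance level $k_e$ (a lower bound on its strength, up to $n^{o(1)}$ factors). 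We also require that $\sum_e \bu(e)/k_e\le n\cdot n^{o(1)}\log U$. Insertions and deletions update this certificate in amortized $n^{o(1)}\log U$ time; we would black-box a known dynamic strength-estimation or expander-decomposition result here.

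To answer $\textsc{Sparsifier}(\beta)$, we sample each arc independently with probability $p_e=\min\{1,\ c\beta\,\bu(e)\log n/k_e\}$. To do this in time proportional to the output size rather than $m$, we group the arcs by (weight class, strength level). For each group we draw the number of samples from the binomial distribution and then pick uniformly random members, using geometric skipping. The expected number of sampled arcs is $\sum_e p_e\le c\beta\log n\sum_e\bu(e)/k_e= q\beta n$. With high probability this is concentrated, and we can truncate if necessary. The running time is $O(q\beta n\log U)$, where the $\log U$ factor comes from iterating over the groups.

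For correctness, fix a cut $S$ with $\bu(\partial^+S)\ge\frac1\beta\bu(\partial S)$. The probability that no out-arc is sampled is
\[
\prod_{e\in\partial^+S}(1-p_e)\le\exp\Big(-\sum_{e\in\partial^+S}p_e\Big).
\]
Since every $e\in\partial S$ has $k_e\lesssim$ the weight $\bu(\partial S)$ of that cut, up to $n^{o(1)}$ slack absorbed into $q$, we get
\[
\sum_{e\in\partial^+S}p_e\ge c\beta\log n\cdot\frac{\bu(\partial^+S)}{\bu(\partial S)}\ge c\log n.
\]
(If some $p_e=1$, the cut succeeds trivially.) This bound alone only handles one cut. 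To get a union bound over exponentially many cuts we follow the Karger cut-counting argument as in Bencz\'ur--Karger. We decompose by strength level: within the $k$-strong components, the number of cuts of weight at most $\alpha k$ is $n^{O(\alpha)}$. The per-cut failure probability then decays like $n^{-c\alpha}$ when the cut has weight $\alpha k$. For this step we use the total-weight cut $\bu(\partial S)$ as the measure, because the sampling rate depends only on undirected quantities. The out-arcs may lie in a different strength level than the cut as a whole, so we apply the argument per level with the Bencz\'ur--Karger telescoping sum.

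The main obstacle is the union bound. The event we need, having at least one out-arc, is not a cut-value concentration statement, and $\partial^+S$ might concentrate on edges of high strength that get sampled at low rate. My first attempt would be to avoid this entirely with the cruder argument above: every edge in $\partial S$ has strength at most $\bu(\partial S)$, so $p_e\ge c\beta\log n\,\bu(e)/\bu(\partial S)$ for all $e\in\partial S$. This reduces the problem to sampling at rate proportional to $\bu(e)/\lambda_e$, with $\lambda_e$ the connectivity of $e$. The count of cuts of value within $\alpha$ times the minimum relevant connectivity then gives the union bound, as in the Fung--Hariharan--Harvey--Panigrahi general sparsification framework. We will invoke that framework, together with its dynamic maintenance via dynamic forest packings, both as black boxes.
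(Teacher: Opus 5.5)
\textbf{There is a genuine gap: the union bound over cuts for the one-sided event $\partial^+_{\widetilde H}S\neq\emptyset$ is never established.} This is the step that carries the theorem.

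Your only concrete estimate, $\sum_{e\in\partial^+S}p_e\ge c\log n$, uses nothing beyond $k_e\le\bu(\partial S)$. So it guarantees failure probability at most $n^{-c}$ per cut, with no dependence on the cut, and that cannot be union-bounded over exponentially many cuts. Recasting it as sampling at rate $\bu(e)/\lambda_e$ changes nothing.

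Your claim that a cut of weight $\alpha k$, counted inside the $k$-strong components, fails with probability $n^{-c\alpha}$ is false. The failure probability is governed by the strengths of the \emph{out}-arcs, not by the level at which you count $S$. For example, suppose all out-arcs of $S$ lie in one component of strength $K\approx\alpha k/\beta$, which $S$ cuts with weight about $K$, and low-strength in-arcs make up the rest of $\partial S$. Then $\sum_{e\in\partial^+S}p_e\approx c\beta\log n$ however large $\alpha$ is, while you union-bound against $n^{\Theta(\alpha)}$ cuts at level $k$.

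This is exactly the obstacle you identify and then leave open. Citing Bencz\'ur--Karger or Fung--Hariharan--Harvey--Panigrahi as a black box does not resolve it. Those results control the sampled \emph{weight} of whole cuts, or of their projections onto importance classes. They say nothing about whether an orientation-dependent subset like $\partial^+S$ is hit.

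\textbf{The missing idea is additivity.} The quantity $\beta\,\bu(\partial^+S)-\bu(\partial S)$ is additive over any edge-disjoint decomposition of $H$, and pieces in which $S$ has no out-arc contribute nonpositive terms. So a good cut stays good, with nonempty out-boundary, inside a \emph{single} piece. The union bound then only has to be done inside that piece, measured in that piece's own scale.

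The paper chooses pieces where this is trivial. It first buckets by weight, costing a factor $2$ in $\beta$. Within each bucket it builds a hierarchy of expander decompositions after adding $m_i/n$ self-loops per vertex, and samples the intra-cluster edges of level $i$ uniformly with probability $\frac{q\beta}{2t}\cdot\frac{n}{m_i}$. Inside an expander $X$ on $U$, a cut $R$ whose smaller side has $k$ vertices satisfies $|\partial_XR|\ge\phi k m_i/n$. Hence it fails with probability $n^{-ck}$, against at most $2|U|^k$ such cuts. No Karger-type counting is needed. The Goranci et al.\ structure maintains everything against an adaptive adversary, which is needed because the flipped arcs depend on earlier samples.

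\textbf{Your sampling rule itself is fine.} The paper's rule is an instance of it: the self-loops certify that every level-$i$ intra-cluster edge has strength at least $\phi m_i/n$, and $p_e$ is exactly $c\beta\ln n$ divided by that bound. This is also the concrete dynamic certificate your plan leaves unspecified.

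To analyze your rule in strength terms, you would need the same averaging over the Bencz\'ur--Karger telescoping decomposition. Let $0=k_0<k_1<\cdots$ be the distinct strengths, and write $\bu(e)=\sum_{i:k_i\le k_e}(k_i-k_{i-1})\,\bu(e)/k_e$. Each $k_i$-strong component has minimum cut at least $1$ under the weights $\bu(e)/k_e$. Averaging then yields a level and component in which $S$ has such weight $\alpha\ge1$ and out-weight at least $\alpha/\beta$. This gives failure probability at most $n^{-c\alpha}$, against Karger's count of $n^{2\alpha}$ cuts.
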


The algorithm initializes an $\alpha$-approximate min-ratio cut data structure (\Cref{thm:min-ratio-cut-data-structure}) on the input graph, where $\by=0$ and all edges have weight $w(u,v)=\frac1{\max\{\by(v)-\by(u),0\}+1}=1$. The detection threshold parameter $\epsilon$ is fixed to $\epsilon=1/(8\alpha)$. The algorithm only updates arc/edge weights when they pass the detection threshold from the data structure. That is, whenever the data structure returns some edge set $E'$, the algorithm makes the following updates for each edge $e\in E'$ with corresponding arc $(u,v)$: first, call $\textsc{DeleteEdge}(e)$ and $\textsc{UpdateGradient}(u,v,-\bu(e))$ to remove the old contribution, and then call $\textsc{InsertEdge}(u,v,w(u,v))$ and $\textsc{UpdateGradient}(u,v,w(u,v))$, where the new weight $w(u,v)$ is computed by querying $\textsc{Potential}(u)$ and $\textsc{Potential}(v)$ to determine $\by(u)$ and $\by(v)$.

The algorithm also initializes a dynamic sparsifier data structure (\Cref{thm:sparsifier-data-structure}) with the initial arc weights of $1$. Whenever an edge weight is changed in the min-ratio cut data structure, the algorithm propagates the change to the sparsifier data structure.

To find the next augmenting path, the algorithm first computes balancing weights as follows. While the data structure maintains an implicit cut $C\subseteq V$ with $g/u\le-1/(3\alpha)$, the algorithm calls $\textsc{ToggleCut}(\eta)$ with $\eta=\frac1{16\alpha u}$. Otherwise, if $g/u>-1/(3\alpha)$, then since the data structure is $\alpha$-approximate, the minimum ratio cut is at least $-1/3$. By \Cref{lem:approx-balanced}, we have $w(\partial^+S)\ge\frac12w(\partial^-S)$ for all $S\subseteq V$. The actual weights maintained by the data structure are approximations of $w(u,v)$, but we show that they are also balanced up to constant factors.

The algorithm calls $\textsc{Sparsifier}(\beta)$ with $\beta=9$, obtaining a subgraph on $n^{1+o(1)}$ arcs. By the previous balance guarantee, there is an augmenting path from $s$ to $t$ in the sampled graph. The algorithm finds such a path in $n^{1+o(1)}$ time and updates both data structures as follows. For each arc $e=(v,u)$ along the path, the algorithm calls $\textsc{DeleteEdge}(e)$/$\textsc{DeleteArc}(e)$ for the corresponding arc/edge $(v,u)$, and then $\textsc{InsertEdge}(e)$/$\textsc{InsertArc}(u,v)$ with the new weight $\frac1{\max\{\by(v)-\by(u),0\}+1}$ of the flipped arc $(u,v)$. The algorithm also calls $\textsc{UpdateGradient}$ accordingly.

\subsection{Stability}

Since edge weights only get updated when they are returned in $E'$, the data structures only maintain approximations of the true weights. In more detail, for each arc $(u,v)$, define the real weight $w(u,v)=\frac1{\max\{\by(v)-\by(u),0\}+1}$ based on the current potential function $\by$, and define the approximate weight $\tilde w(u,v)=\bu(u,v)$ as the current weight maintained by the data structure, which is the real weight at the last time the corresponding edge $(u,v)$ was inserted or returned in $E'$.

Our algorithm requires that the approximate weights $\tilde w(u,v)$ indeed approximate the real weights. We begin with a stability guarantee on the weight function that is used throughout the analysis.
\begin{lemma}[Stability]\label{lem:stability}
Consider two values $x_1,x_2\in\mathbb R$, and define the weights $w_i=\frac1{\max\{x_i,0\}+1}$ for $i\in\{1,2\}$. For any parameter $0\le\delta<1$, if $|x_1-x_2|\le\delta/w_1$, then $\frac1{1+\delta}w_1\le w_2\le\frac1{1-\delta}w_1$.
\end{lemma}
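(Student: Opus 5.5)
The plan is to work with reciprocals. Set $a_i=1/w_i=\max\{x_i,0\}+1$. The map $x\mapsto\max\{x,0\}$ is $1$-Lipschitz, so
\[ |a_1-a_2|=\bigl|\max\{x_1,0\}-\max\{x_2,0\}\bigr|\le|x_1-x_2|\le\delta/w_1=\delta a_1 . \]
This single inequality drives the whole argument. It gives $(1-\delta)a_1\le a_2\le(1+\delta)a_1$. Because $\delta<1$ and $a_1\ge1>0$, both bounds are strictly positive. Taking reciprocals then reverses the inequalities and yields $\frac1{1+\delta}w_1\le w_2\le\frac1{1-\delta}w_1$, which is exactly the claim.

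In order, the steps are as follows. First, rewrite the hypothesis as $|x_1-x_2|\le\delta a_1$. Second, justify the Lipschitz property of $\max\{\cdot,0\}$, for instance by a short case check on the signs of $x_1$ and $x_2$. Third, derive the two-sided multiplicative bound on $a_2$. Fourth, invert it.

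There is no real obstacle in this argument. The only point needing care is the lower bound $a_2\ge(1-\delta)a_1>0$, which makes the inversion legitimate and uses the assumption $\delta<1$. Note also that $a_2\ge1$ holds automatically, so positivity of $a_2$ itself is never in doubt; the positivity of $(1-\delta)a_1$ is what matters for the upper bound on $w_2$.
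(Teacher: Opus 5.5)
Your proof is correct and takes essentially the same route as the paper. Both arguments show that the denominator $\max\{x_2,0\}+1$ lies within $\pm\delta/w_1$ of $1/w_1$ and then invert, using $\delta<1$ to keep the lower bound $(1-\delta)/w_1$ positive. The paper does this with two one-sided chains through $\max\{x_1\pm\delta/w_1,0\}$, whereas you package both sides into the single $1$-Lipschitz inequality for $\max\{\cdot,0\}$.
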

\begin{proof}
We bound
\begin{align*}
w_2=\frac1{\max\{x_2,0\}+1}
\ge\frac1{\max\{x_1+\delta/w_1,0\}+1}
\ge\frac1{\max\{x_1,0\}+1+\delta/w_1}
=\frac1{1/w_1+\delta/w_1}
=\frac{w_1}{1+\delta}
\end{align*}
and
\begin{align*}
w_2=\frac1{\max\{x_2,0\}+1}
\le\frac1{\max\{x_1-\delta/w_1,0\}+1}
\le\frac1{\max\{x_1,0\}+1-\delta/w_1}
=\frac1{1/w_1-\delta/w_1}
=\frac{w_1}{1-\delta}
\end{align*}
as desired.
\end{proof}

\begin{lemma}[Approximation of weights]\label{lem:approx-weights}
For any arc $(u,v)$, we have $\frac1{1+\epsilon}\tilde w(u,v)\le w(u,v)\le\frac1{1-\epsilon}\tilde w(u,v)$.
\end{lemma}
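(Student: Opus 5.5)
We reduce the claim to \Cref{lem:stability}, applied with $x_1=\by_0(v)-\by_0(u)$ and $x_2=\by(v)-\by(u)$. Here $\by_0$ is the potential at the last time the edge $e=\{u,v\}$ was inserted into, or returned in $E'$ by, the min-ratio cut data structure, and $\by$ is the current potential.

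\emph{Identifying the weights.} By the update rule, at that last time the algorithm set $\bu(e)=\tilde w(u,v)=\frac1{\max\{x_1,0\}+1}$. So $\tilde w(u,v)=w_1$ in the notation of \Cref{lem:stability}, and the true current weight is $w(u,v)=w_2$. Taking $\delta=\epsilon=1/(8\alpha)<1$, the lemma gives $\frac1{1+\epsilon}\tilde w\le w\le\frac1{1-\epsilon}\tilde w$. This is exactly the claim, provided we verify the hypothesis $|x_1-x_2|\le\epsilon/w_1$, i.e.\ $\bu(e)\,|x_1-x_2|\le\epsilon$.

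\emph{Verifying the hypothesis.} The quantity $\bu(e)(\by(u)-\by(v))$ changes from its value at insertion or last return by exactly $\bu(e)(x_1-x_2)$ in absolute value, since $\bu(e)$ is fixed between updates. The guarantee of \textsc{ToggleCut} says that whenever this change reaches an additive $\epsilon$, the edge is returned in $E'$. The algorithm then immediately recomputes and reinserts the weight, which resets $\by_0$ to the current $\by$. Hence between such events, after every completed operation, the change is strictly less than $\epsilon$. This gives $\bu(e)|x_1-x_2|<\epsilon$, i.e.\ $|x_1-x_2|<\epsilon/w_1$.

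\emph{Main obstacle.} The delicate point is bookkeeping. The potential changes only through \textsc{ToggleCut}, so $\by$ moves only at those calls. A single toggle shifts $\by(u)-\by(v)$ by at most $\eta\le 1/(16\alpha u)$, so even the intermediate values cannot overshoot arbitrarily. In any case we only need the guarantee at the moments the lemma is invoked, namely after $E'$ has been processed. One must also check the other places weights change. When an arc is flipped along an augmenting path, it is reinserted with the weight computed from the current potential, which again resets the reference point. Also, the orientation convention $(u,v)$ versus $(v,u)$ matters only through the sign of $x$, and the absolute value in the hypothesis makes it harmless. With these observations the statement follows directly from \Cref{lem:stability}.
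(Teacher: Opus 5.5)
Your proposal is correct and is essentially the paper's proof. Both take the potential at the edge's last insertion or return, use the data-structure guarantee to get $\tilde w(u,v)\,|x_1-x_2|<\epsilon$, and apply \Cref{lem:stability} with $\delta=\epsilon$. Your choice $x_i=\by(v)-\by(u)$ is the orientation that matches the weight formula; the paper's proof writes $x_1=\tilde\by(u)-\tilde\by(v)$, which is a sign typo.
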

\begin{proof}
Let $\tilde\by$ be the potential function when the edge $(u,v)$ was last inserted or returned in $E'$. In particular, $\tilde w(u,v)=\frac1{\max\{\tilde\by(v)-\tilde\by(u),0\}+1}$.
By the guarantee of the data structure, we have
\[ \tilde w(u,v)\cdot|(\tilde\by(u)-\tilde\by(v))-(\by(u)-\by(v))|<\epsilon \iff |(\tilde\by(u)-\tilde\by(v))-(\by(u)-\by(v))|<\epsilon/\tilde w(u,v) .\]
Applying \Cref{lem:stability} with $x_1=\tilde\by(u)-\tilde\by(v)$ and $x_2=\by(u)-\by(v)$ and $\delta=\epsilon$ finishes the proof.
\end{proof}

Finally, we show that if the algorithm stops calling $\textsc{ToggleCut}(\eta)$, at which point $g/u>-1/(3\alpha)$, then the (approximate) edge weights maintained by the data structure are balanced. In particular, $\textsc{Sparsifier}(\beta)$ with $\beta=9$ returns a subgraph containing an augmenting path from $s$ to $t$ with high probability.
\begin{lemma}
If $g/u>-1/(3\alpha)$, then $\tilde w(\partial^+S)\ge\frac19\tilde w(\partial S)$ for all $S\subseteq V$.
\end{lemma}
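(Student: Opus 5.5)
The plan is to apply \Cref{lem:approx-balanced} directly to the graph that the min-ratio cut data structure actually stores, namely the undirected graph with the approximate weights $\tilde w$, rather than to the real weights $w$. The first step is to check that the gradient inside the data structure is exactly $\tilde\bg=\sum_{(u,v)}\tilde w(u,v)\cdot(\mathbbm 1_u-\mathbbm 1_v)$. This holds because of how updates are made. Whenever an edge weight changes, either through $E'$ or through a flip along an augmenting path, the algorithm subtracts the old stored weight $\bu(e)=\tilde w$ from the gradient and adds the new stored weight. The stored gradient and the stored weights are therefore always consistent with each other, even though both may lag behind the real $w$. The lemma $\bg(S)=w(\partial^+S)-w(\partial^-S)$ used only this consistency, so it applies verbatim to give $\tilde\bg(S)=\tilde w(\partial^+S)-\tilde w(\partial^-S)$.

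Next, I will translate the stopping condition into a bound on the true minimum ratio of this instance. By the $\alpha$-approximation guarantee, the maintained value $g/u\le0$ is at most $\frac1\alpha$ times the minimum ratio $\rho$ over all cuts; that is, $g/u\le\rho/\alpha$ when $\rho<0$. Hence $\rho\ge\alpha\cdot g/u>-1/3$. If instead $\rho\ge0$, the bound $\rho>-1/3$ holds trivially. So every cut satisfies $\tilde\bg(S)/\tilde w(\partial S)>-1/3$, and cuts with $\tilde w(\partial S)=0$ cannot occur because the residual graph is strongly connected and all weights are positive.

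Now I will run the computation of \Cref{lem:approx-balanced} with $\tilde w$ in place of $w$. This gives $\tilde w(\partial^+S)\ge\frac12\tilde w(\partial^-S)$, and therefore $\tilde w(\partial S)=\tilde w(\partial^+S)+\tilde w(\partial^-S)\le3\,\tilde w(\partial^+S)$. So $\tilde w(\partial^+S)\ge\frac13\tilde w(\partial S)\ge\frac19\tilde w(\partial S)$, which proves the claim with room to spare.

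The step I expect to be the main obstacle is the first one: confirming that the gradient in the data structure is built from $\tilde w$ and not from the real $w$. If the paper's intended reading were instead that the data structure's guarantee applies only to the real weights $w$, I would fall back on \Cref{lem:approx-weights}. That lemma gives $\tilde w\in[(1-\epsilon)w,(1+\epsilon)w]$ for each arc with $\epsilon=1/(8\alpha)\le1/8$. From $w(\partial^+S)\ge\frac13 w(\partial S)$ I would then lose factors of $(1\pm\epsilon)$ on both sides. This gives $\tilde w(\partial^+S)\ge\frac{1-\epsilon}{3(1+\epsilon)}\tilde w(\partial S)\ge\frac19\tilde w(\partial S)$, which explains why the statement allows the weaker constant $\frac19$.
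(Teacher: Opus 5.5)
Your proof is correct, but your main route differs from the paper's.

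\textbf{The paper's route.} It applies \Cref{lem:approx-balanced} to the \emph{real} weights $w$ to get $w(\partial^+S)\ge\frac12 w(\partial^-S)$. It then moves to the stored weights via \Cref{lem:approx-weights}, using the crude bounds $\frac12\tilde w\le w\le 2\tilde w$. This gives $\tilde w(\partial^+S)\ge\frac18\tilde w(\partial^-S)$, which is where the constant $\frac19$ comes from.

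\textbf{Your route.} You apply the min-ratio guarantee to the instance the data structure actually stores. You first check the invariant that the stored gradient is $\sum\tilde w(u,v)(\mathbbm 1_u-\mathbbm 1_v)$. This invariant is maintained by the update rules for both $E'$ and path flips, and the paper itself relies on it in the proof of \Cref{lem:energy-decrease}, where $\bg(C)=\tilde w(\partial^+C)-\tilde w(\partial^-C)$.

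\textbf{What each approach buys.}
\begin{itemize}
\item Your route needs no stability argument. It gives the stronger bound $\tilde w(\partial^+S)\ge\frac13\tilde w(\partial S)$, so $\beta=3$ would already suffice for the sparsifier call.
\item It is also the more literal reading. The certificate $g/u>-1/(3\alpha)$ is a statement about the pair $(\tilde w,\tilde\bg)$. So the paper's first step, $w(\partial^+S)\ge\frac12 w(\partial^-S)$, is strictly justified only up to $(1\pm\epsilon)$ factors: one gets $\frac{1-\epsilon}{2(1+\epsilon)}$ in place of $\frac12$. This is harmless given the slack, but your argument avoids the detour entirely.
\item Your fallback is essentially the paper's argument with tighter bookkeeping. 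It gives $\frac{1-\epsilon}{3(1+\epsilon)}\ge\frac{7}{27}$ rather than $\frac19$, and shows the $\frac19$ holds under either reading.
\end{itemize}
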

\begin{proof}
By \Cref{lem:approx-balanced}, we have $w(\partial^+S)\ge\frac12w(\partial^-S)$ for all $S\subseteq V$. Together with \Cref{lem:approx-weights}, we have $2\tilde w(\partial^+S)\ge w(\partial^+S)\ge\frac12w(\partial^-S)\ge\frac14\tilde w(\partial^-S)$, so $\tilde w(\partial^+S)\ge\frac18\tilde w(\partial^-S)$, or equivalently $\tilde w(\partial^+S)\ge\frac19\tilde w(\partial S)$.
\end{proof}

\subsection{Energy Bounds}\label{sec:energy-bounds}

To bound the running time, we need to track the number of calls to \textsc{ToggleCut}. Let $M=n^{O(1)}$ be large enough, and define the \emph{energy} of arc $(u,v)$ as
\[ \mathcal E(u,v)=\int_{\by(v)-\by(u)}^M\frac1{\max\{x,0\}+1}\,dx .\]
Observe that
\[ \mathcal E(u,v)=
\begin{cases}
\ln(M+1)-\ln(\by(v)-\by(u)+1)&\text{if }\by(v)\ge\by(u),
\\\ln(M+1)+\by(u)-\by(v)&\text{if }\by(v)\le\by(u).\end{cases} \]

The following lemma shows that each call to \textsc{ToggleCut} decreases the total energy by an additive $1/n^{o(1)}$.
\begin{lemma}[Energy decrease]\label{lem:energy-decrease}
On each call to \textsc{ToggleCut}, the total energy decreases by at least $\Omega(1/\alpha^3)$.
\end{lemma}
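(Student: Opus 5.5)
The plan is to compute the change in total energy directly, one arc at a time. We then compare it to the quantity $g=\bg(C)$ maintained by the data structure, which is computed with the approximate weights $\tilde w$.

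A call to $\textsc{ToggleCut}(\eta)$ raises $\by$ by $\eta$ on the implicit cut $C$. Arcs with both endpoints in $C$, or both outside $C$, keep $\by(v)-\by(u)$ unchanged, so their energy is unchanged. Since $\mathcal E(u,v)=\int_{\by(v)-\by(u)}^M w$, the two kinds of boundary arcs behave as follows:
\begin{itemize}
\item For an arc $(u,v)\in\partial^+C$, $\by(u)$ rises, so the lower limit $\by(v)-\by(u)$ drops by $\eta$. The energy increases by $\int_{x-\eta}^{x}\frac{dx'}{\max\{x',0\}+1}$, where $x$ is the old value of $\by(v)-\by(u)$.
\item For an arc in $\partial^-C$, the lower limit rises by $\eta$, so the energy decreases by the analogous integral over $[x,x+\eta]$.
\end{itemize}
Up to error factors, the total change is therefore $\eta\,(w(\partial^+C)-w(\partial^-C))$, and this should be close to $\eta g$.

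To control the error factors, we apply \Cref{lem:stability} with $x_1=x$ the old value and $x_2$ any point within distance $\eta$ of it. This requires $\eta\le\delta/w(u,v)$. By \Cref{lem:approx-weights}, $w(u,v)\le\frac{1}{1-\epsilon}\tilde w(u,v)$. Since $(u,v)\in\partial C$, we also have $\tilde w(u,v)\le\tilde w(\partial C)\le u$. Hence $\eta w(u,v)\le\frac{1}{16\alpha(1-\epsilon)}$, and we can take $\delta\approx 1/(16\alpha)$. Consequently:
\begin{itemize}
\item each energy increase on $\partial^+C$ is at most $\eta\,\frac{1}{1-\delta}\,w(u,v)$;
\item each energy decrease on $\partial^-C$ is at least $\eta\,\frac{1}{1+\delta}\,w(u,v)$.
\end{itemize}
Applying \Cref{lem:approx-weights} again converts these to approximate weights. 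The total change is at most
\[\eta\Big(\tfrac{1}{(1-\delta)(1-\epsilon)}\tilde w(\partial^+C)-\tfrac{1}{(1+\delta)(1+\epsilon)}\tilde w(\partial^-C)\Big)\le \eta\big(g+c(\delta+\epsilon)\,\tilde w(\partial C)\big),\]
for a small absolute constant $c$. Here $g=\tilde w(\partial^+C)-\tilde w(\partial^-C)$ by the earlier lemma expressing $\bg(S)$ as $w(\partial^+S)-w(\partial^-S)$, applied to the weights stored in the data structure.

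Finally, $\textsc{ToggleCut}$ is only called when $g\le -u/(3\alpha)$. With $\epsilon=1/(8\alpha)$ and $\delta\approx1/(16\alpha)$, the error term is at most roughly $(3/16)\,u/\alpha$ up to lower-order terms. This is strictly smaller than $u/(3\alpha)$ by a constant factor. The total energy therefore decreases by at least $\Omega(\eta u/\alpha)=\Omega(1/\alpha^2)$ when $\eta=\frac{1}{16\alpha u}$, which is at least the claimed $\Omega(1/\alpha^3)$.

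The main obstacle is the constant bookkeeping in this last step. We must check that the multiplicative errors from \Cref{lem:stability} and \Cref{lem:approx-weights} together cost less than a constant fraction of $u/(3\alpha)$. We must also check that the per-arc condition $\eta\le\delta/w$ really follows from $\tilde w(u,v)\le u$. If the constants come out too tight, the fallback is to absorb the slack into a weaker bound, such as the $1/\alpha^3$ in the statement.
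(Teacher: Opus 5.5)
Your proof is correct and closes as written. It shares the paper's skeleton:
\begin{itemize}
\item per-arc energy changes written as integrals over $[x-\eta,x]$ and $[x,x+\eta]$;
\item \Cref{lem:stability} to pin the integrand near $w(u,v)$, via $\tilde w(u,v)\le\bu(\partial C)\le u$;
\item \Cref{lem:approx-weights} to pass to $\tilde w$.
\end{itemize}

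The difference is the final accounting. The paper turns $g/u\le-1/(3\alpha)$ into the multiplicative imbalance $\tilde w(\partial^+C)\le\frac{3\alpha-1}{3\alpha+1}\tilde w(\partial^-C)$. It bounds the drop by $\eta\cdot\Omega(1/\alpha)\cdot\tilde w(\partial^-C)$, with $\tilde w(\partial^-C)\ge\frac12\bu(\partial C)$. Because $\eta$ is normalized by $u$ rather than $\bu(\partial C)$, it then needs the extra fact $\bu(\partial C)\ge u/\alpha$, taken from the $\alpha$-approximation guarantee. That step costs the third factor of $\alpha$.

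You keep the error additive, at most $c(\delta+\epsilon)\bu(\partial C)\le c(\delta+\epsilon)u$, and compare it directly with $g\le-u/(3\alpha)$. Everything is then measured against $u$, so $\eta u=\frac1{16\alpha}$ gives a drop of $\Omega(1/\alpha^2)$. No lower bound on $\bu(\partial C)$ is ever needed, so your argument proves a stronger bound than stated.

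The price is sensitivity to constants. Your sharper choice $\delta=\frac1{16\alpha(1-\epsilon)}$ helps here, compared with the paper's $\delta=\frac1{8\alpha}$ obtained from $w\le2\tilde w$. The bookkeeping you flagged closes for every $\alpha\ge1$, so the fallback is unnecessary:
\begin{itemize}
\item The worst case is $\alpha=1$, with $\delta=\frac1{14}$ and $\epsilon=\frac18$.
\item There the coefficient $\frac1{(1-\delta)(1-\epsilon)}$ on $\tilde w(\partial^+C)$ exceeds $1$ by $\frac{21}{91}<\frac13$.
\item The coefficient $\frac1{(1+\delta)(1+\epsilon)}$ on $\tilde w(\partial^-C)$ falls short of $1$ by less than that; in general the shortfall never exceeds the excess.
\item As $\alpha$ grows, $\alpha$ times the excess decreases monotonically to $\frac3{16}$.
\end{itemize}
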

\begin{proof}
On each call to $\textsc{ToggleCut}(\eta)$, the implicit cut $C$ satisfies $g/u\le-1/(3\alpha)$, where $\bg(C)=g<0$ and $\bu(\partial C)\le u$. We also have $\bu(\partial C)\ge u/\alpha$ since otherwise, if $\bu(\partial C)<u/\alpha$, then the ratio $\bg(C)/\bu(\partial C)$ would be less than $\bg(C)/(u/\alpha)=\alpha\cdot g/u$, contradicting the fact that $g/u$ is at most $1/\alpha$ times the min-ratio cut.

We now analyze the contribution of $\textsc{ToggleCut}(\eta)$ to the total energy. Let $w$ and $\by$ be the weight and potential function immediately before $\textsc{ToggleCut}(\eta)$. For each arc $(u,v)\in\partial^+C$, the value of $\by(u)$ is increased by $\eta$, so the energy $\mathcal E(u,v)$ increases by
\[ \int_{\by(v)-\by(u)-\eta}^{\by(v)-\by(u)}\frac1{\max\{x,0\}+1}\,dx .\]
Similarly, for each arc $(u,v)\in\partial^-C$, the value of $\by(v)$ is increased by $\eta$, so the energy $\mathcal E(u,v)$ decreases by
\[ \int_{\by(v)-\by(u)}^{\by(v)-\by(u)+\eta}\frac1{\max\{x,0\}+1}\,dx .\]
By the approximation of weights (\Cref{lem:approx-weights}), we have $w(u,v)\le2\tilde w(u,v)$, so for any arc $(u,v)\in\partial C$, we have $u\ge\bu(\partial C)\ge\tilde w(u,v)\ge\frac12w(u,v)$.
Since $\eta=\frac1{16\alpha u}\le\frac1{8\alpha\cdot w(u,v)}$, by the stability guarantee (\Cref{lem:stability}) with $\delta=\frac1{8\alpha}$, we have
\[ \frac1{1+1/(8\alpha)}w(u,v)\le\frac1{\max\{x,0\}+1}\le\frac1{1-1/(8\alpha)}w(u,v) \text{ for all }x\in[\by(v)-\by(u)-\eta,\,\by(v)-\by(u)+\eta]. \]
In particular, the energy increase of each arc $(u,v)\in\partial^+C$ is at most $\eta\cdot\frac1{1-1/(8\alpha)}\cdot w(u,v)$, and the energy decrease from each arc $(u,v)\in\partial^-C$ is at least $\eta\cdot\frac1{1+1/(8\alpha)}\cdot w(u,v)$. Summed over all such arcs, the net change in energy is at most
\[ \eta\cdot\frac1{1-1/(8\alpha)}\cdot w(\partial^+C)-\eta\cdot\frac1{1+1/(8\alpha)}\cdot w(\partial^-C) .\]

Recall that $\bg(C)/\bu(\partial C)\le g/u\le-1/(3\alpha)$ and $\bg(C)=\tilde w(\partial^+C)-\tilde w(\partial^-C)$ and $\bu(\partial C)=\tilde w(\partial^+C)+\tilde w(\partial^-C)$. We have
\begin{align*}
3\alpha\cdot\bg(C)+\bu(\partial C)\le0&\iff(3\alpha+1)\cdot \tilde w(\partial^+C)-(3\alpha-1)\cdot \tilde w(\partial^-C)\le0
\\&\iff \tilde w(\partial^+C)\le\frac{3\alpha-1}{3\alpha+1}\cdot \tilde w(\partial^-C).
\end{align*}
Together with the approximation of weights (\Cref{lem:approx-weights}), we have
\[ w(\partial^+C)\le\frac1{1-\epsilon}\cdot\tilde w(\partial^+C)\le\frac1{1-\epsilon}\cdot\frac{3\alpha-1}{3\alpha+1}\cdot \tilde w(\partial^-C)\le\frac1{1-\epsilon}\cdot\frac{3\alpha-1}{3\alpha+1}\cdot(1+\epsilon)\cdot w(\partial^-C). \]
Since $\epsilon=1/(8\alpha)$, it follows that the net change in energy is at most
\begin{align*}
&\eta\cdot\frac1{1-1/(8\alpha)}\cdot w(\partial^+C)-\eta\cdot\frac1{1+1/(8\alpha)}\cdot w(\partial^-C)
\\\le{}&\eta\cdot\frac1{1-1/(8\alpha)}\cdot\frac{1+1/(8\alpha)}{1-1/(8\alpha)}\cdot\frac{3\alpha-1}{3\alpha+1}\cdot w(\partial^-C)-\eta\cdot\frac1{1+1/(8\alpha)}\cdot w(\partial^-C)
\\\le{}&\eta\cdot-\Omega\left(\frac1\alpha\right)\cdot w(\partial^-C)
\\\le{}&\eta\cdot-\Omega\left(\frac1\alpha\right)\cdot\tilde w(\partial^-C).
\end{align*}
Recall that $\bu(\partial C)\ge u/\alpha$, which means that $\eta=\frac1{16\alpha u}\ge\frac1{16\alpha^2\bu(\partial C)}$. Also, since $\bg(C)<0$, we have $\tilde w(\partial^-C)\ge\frac12\tilde w(\partial C)=\frac12\bu(\partial C)$. We conclude that the net change in energy is at most
\[ \frac1{16\alpha^2\bu(\partial C)}\cdot-\Omega\left(\frac1\alpha\right)\cdot\frac12\bu(\partial C)=-\Omega\left(\frac1{\alpha^3}\right), \]
as desired.
\end{proof}

Next, suppose that the algorithm finds an augmenting path from $s$ to $t$. The following lemma bounds the total increase in energy from flipping the arcs along the augmenting path.

\begin{lemma}[Energy increase]\label{lem:energy-increase}
Assume that all vertex potentials are within an additive $M$ of each other, i.e., $|\by(u)-\by(v)|\le M$ for all $u,v\in V$. Suppose the algorithm flips all arcs along a (simple) path from $s$ to $t$. The total energy increases by at most $\by(t)-\by(s)+(n-1)\ln(M+1)$.
\end{lemma}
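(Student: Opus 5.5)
The plan is to bound the energy change arc by arc. The path is simple from $s$ to $t$, so it has at most $n-1$ arcs $(v_0,v_1),\dots,(v_{k-1},v_k)$ with $v_0=s$, $v_k=t$ and $k\le n-1$. Flipping arc $(v_{i-1},v_i)$ removes its energy $\mathcal E(v_{i-1},v_i)$ and adds $\mathcal E(v_i,v_{i-1})$. It therefore suffices to prove the per-arc bound
\[ \mathcal E(v,u)-\mathcal E(u,v)\le \by(v)-\by(u)+\ln(M+1) \]
for every arc $(u,v)$ when $|\by(u)-\by(v)|\le M$. Summing over the path, the terms $\by(v_i)-\by(v_{i-1})$ telescope to $\by(t)-\by(s)$, and the $\ln(M+1)$ terms contribute at most $(n-1)\ln(M+1)$. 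This gives the claimed bound.

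For the per-arc bound, write $d=\by(v)-\by(u)$, so $|d|\le M$. Directly from the integral definition,
\[ \mathcal E(v,u)-\mathcal E(u,v)=\int_{-d}^{d}\frac{1}{\max\{x,0\}+1}\,dx, \]
interpreted as a signed integral. Alternatively one can use the closed-form case split stated after the definition of energy. There are two cases.
\begin{itemize}
\item If $d\ge 0$, the integrand equals $1$ on $[-d,0]$, contributing $d$. On $[0,d]$ it contributes $\ln(d+1)\le\ln(M+1)$. The total is at most $d+\ln(M+1)$.
\item If $d<0$, the signed integral equals $-\bigl(d+\ln(1-d)\bigr)$, after applying the $d\ge0$ computation to $-d$. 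This is $d-2d-\ln(1-d)$. We need $-2d-\ln(1-d)\le\ln(M+1)$, which is not obviously true: with $d=-M$ the left side is $2M-\ln(M+1)$.
\end{itemize}

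This second case is the main obstacle, so I recheck the signs there. For $d<0$, the closed forms give $\mathcal E(u,v)=\ln(M+1)-d$, since $\by(v)\le\by(u)$ puts us in the second branch with $\by(u)-\by(v)=-d$. For the reversed arc $(v,u)$, we have $\by(u)-\by(v)=-d>0$, so it falls in the first branch and $\mathcal E(v,u)=\ln(M+1)-\ln(1-d)$. The difference is therefore $d-\ln(1-d)\le d$, which is well within the bound.

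Rechecking the case $d\ge0$ the same way: $\mathcal E(u,v)=\ln(M+1)-\ln(d+1)$ and $\mathcal E(v,u)=\ln(M+1)+d$. The difference is $d+\ln(d+1)\le d+\ln(M+1)$, using $d\le M$. So the signed-integral shortcut above had a sign slip, and the closed forms are what I will rely on.

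The assumption $|\by(u)-\by(v)|\le M$ is used only to bound $\ln(d+1)\le\ln(M+1)$. It also guarantees that the integration limits stay below $M$, so that the closed forms for $\mathcal E$ are valid. With the per-arc bound established in both cases, the telescoping sum completes the proof.
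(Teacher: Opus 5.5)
Your proof is correct and follows the paper's argument: a per-arc case split on the sign of $d=\by(v)-\by(u)$ using the closed forms gives a change of $d+\ln(d+1)\le d+\ln(M+1)$ or $d-\ln(1-d)\le d$, and these then telescope over the at most $n-1$ arcs of the simple path. Your signed-integral identity $\mathcal E(v,u)-\mathcal E(u,v)=\int_{-d}^{d}\frac{1}{\max\{x,0\}+1}\,dx$ was actually right, and only its evaluation for $d<0$ slipped (it correctly gives $d-\ln(1-d)$), so a final write-up should simply drop that detour.
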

\begin{proof}
Recall that
\[ \mathcal E(u,v)=
\begin{cases}
\ln(M+1)-\ln(\by(v)-\by(u)+1)&\text{if }\by(v)\ge\by(u),
\\\ln(M+1)+\by(u)-\by(v)&\text{if }\by(v)\le\by(u).\end{cases} \]
Consider the energy difference from flipping an individual arc $(u,v)$. If $\by(v)\ge\by(u)$, then the original energy is $\ln(M+1)-\ln(\by(v)-\by(u)+1)$ and the new energy is $\ln(M+1)+\by(v)-\by(u)$, so the net change in energy is
\begin{align*}
&\big(\ln(M+1)+\by(v)-\by(u)\big)-\big(\ln(M+1)-\ln(\by(v)-\by(u)+1)\big)
\\={}&\by(v)-\by(u)+\ln(\by(v)-\by(u)+1)
\\\le{}&\by(v)-\by(u)+\ln(M+1),
\end{align*}
where we use the assumption that $\by(v)-\by(u)\le M$.
Similarly, if $\by(v)\le\by(u)$, then the original energy is $\ln(M+1)+\by(u)-\by(v)$ and the new energy is $\ln(M+1)-\ln(\by(u)-\by(v)+1)$, so the net change in energy is
\begin{align*}
&\big(\ln(M+1)-\ln(\by(u)-\by(v)+1)\big)-\big(\ln(M+1)+\by(u)-\by(v)\big)
\\={}&\by(v)-\by(u)-\ln(\by(u)-\by(v)+1)
\\\le{}&\by(v)-\by(u).
\end{align*}
In both cases, the net change in energy is at most $\by(v)-\by(u)+\ln(M+1)$. Summed over all arcs $(u,v)$ on the path, the $\by(v)-\by(u)$ terms telescope to $\by(t)-\by(s)$, and the $\ln(M+1)$ terms sum to at most $(n-1)\ln(M+1)$.
\end{proof}

\subsection{Running Time}

Outside of the calls to \textsc{ToggleCut}, the algorithm spends $n^{1+o(1)}$ time on each augmenting path: the algorithm computes a sparsifier, finds an augmenting path, flips the path, and updates the data structures. To bound the number of calls to \textsc{ToggleCut}, it suffices to bound the total energy increase over the course of the algorithm, since the total energy starts at $m\ln(M+1)$ and decreases by $\Omega(\alpha^3)=1/n^{o(1)}$ on each call to \textsc{ToggleCut} (\Cref{lem:energy-decrease}).

Let $\mathcal E$ be the current total energy. By \Cref{lem:energy-increase}, the energy increase is at most $\by(t)-\by(s)+(n-1)\ln(M+1)$ after flipping the next augmenting path. If $\by(t)-\by(s)\ge0$, then we bound $\by(t)-\by(s)$ as follows. Each of the arcs $(t,s)$ added during preprocessing has energy $\ln(M+1)+\by(t)-\by(s)\ge\by(t)-\by(s)$. There are $\Omega(m)$ such arcs, so their total energy $\Omega(m)\cdot(\by(t)-\by(s))$ is at most $\mathcal E$. We conclude that $\by(t)-\by(s)\le O(\mathcal E/m)$, and the total energy after flipping the augmenting path is at most $(1+O(1/m))\cdot\mathcal E+(n-1)\ln(M+1)$. The total number of augmenting paths is $F\le m$, so the total energy increase is at most $(1+O(1/m))^F\cdot (m\ln(M+1)+F(n-1)\ln(M+1))\le O((m+nF)\ln(M+1))$.

At this point, we can also justify the assumption $\max_{u,v}|\by(u)-\by(v)|\le M$ from \Cref{lem:energy-increase} for $M=n^{O(1)}$ as long as the residual graph is strongly connected. For the sake of argument, re-define $M$ as the largest value of $\max_{u,v}|\by(u)-\by(v)|$ over the course of the algorithm, and suppose for contradiction that $M$ is too large. Consider the moment of the algorithm at which $\by(u)-\by(v)=M$ for some $u,v\in V$. Consider a path from $u$ to $v$ of at most $n-1$ arcs in the current residual graph. There is an arc $(u',v')$ on the path with $\by(u')-\by(v')\ge M/(n-1)$, which has energy $\ln(M+1)+\by(u')-\by(v')\ge M/(n-1)$. In particular, the total energy is at least $M/(n-1)$. However, by the previous energy bound, the total energy at this point is at most $O((m+nF)\ln(M+1))$. For large enough $M=n^{O(1)}$, we obtain a contradiction.

We conclude that the algorithm has total running time $(m+nF)\cdot n^{o(1)}$ while the residual graph is strongly connected. When the flow is maximum, the residual graph $G_f$ is no longer strongly connected, and the algorithm may have undefined behavior. However, we can impose a running time limit of $(m+n|f|)\cdot n^{o(1)}$ where $|f|$ is the current flow value, and terminate the algorithm if the limit is exceeded. This completes the proof of our main result (\Cref{thm:main}).

To obtain \Cref{cor:main}, we use the fact that $\sqrt n$ rounds of Dinic's blocking flow algorithm~\cite{dinic1970algorithm} reduces the maximum flow of the residual graph to at most $m/\sqrt n$~\cite{karzanov1973finding,even1975network}. Computing the blocking flows takes time $O(m\sqrt n)$, and applying \Cref{thm:main} with $F\le m/\sqrt n$ on the residual graph takes time $mn^{1/2+o(1)}$.

\section{Dynamic Sparsification}\label{sec:dynamic-sparsification}

In this section, we present our dynamic sparsifier data structure to establish \Cref{thm:sparsifier-data-structure}, restated below together with the definition.

\SparsifierDefinition*
\SparsifierTheorem*

\subsection{Reduction to Unweighted Graphs}

The dynamic algorithm first reduces to the case of unweighted graphs by bucketing the arcs and running an unweighted dynamic sparsifier with quality $n^{o(1)}$ on each bucket. More precisely, for each integer $i$, let $H_i\subseteq H$ be the subgraph of all arcs with weight in the range $[2^i,2^{i+1})$. Since $\bu(e)\in[1/U,U]$, there are only $O(\log U)$ non-empty subgraphs $H_i$. For each $H_i$, treated as a dynamic \emph{unweighted} graph, initialize a dynamic sparsifier data structure where all arcs have weight $1$. Finally, on each query to $\textsc{Sparsifier}(\beta)$, the algorithm calls $\textsc{Sparsifier}(2\beta)$ on the data structure for each $H_i$, obtaining a subgraph $\widetilde H_i\subseteq H_i$. The algorithm outputs the union of the subgraphs $\widetilde H_i$ as the final sparsifier $\widetilde H$, which has $q\cdot\beta n\cdot O(\log U)$ arcs.

To prove correctness of this reduction, consider a cut $S\subseteq V$ with $\bu(\partial_H^+S)\ge\frac1\beta \bu(\partial_HS)$, rewritten as $\beta\cdot \bu(\partial^+_HS)-\bu(\partial_HS)\ge0$. By our bucketing scheme, we can bound $\bu(\partial^+_HS)$ and $\bu(\partial_HS)$ by
\[ \bu(\partial^+_HS)\le\sum_i2^{i+1}\cdot|\partial^+_{H_i}S|\quad\text{and}\quad \bu(\partial_HS)\ge\sum_i2^i\cdot|\partial_{H_i}S|, \]
so
\[ \sum_i2^i\cdot(2\beta\cdot|\partial^+_{H_i}S|-|\partial_{H_i}S|)=\beta\cdot\sum_i2^{i+1}\cdot|\partial^+_{H_i}S|-\sum_i2^i\cdot|\partial_{H_i}S|\ge\beta\cdot \bu(\partial^+_HS)-\bu(\partial_HS)\ge0 .\]
In particular, there exists an $i$ with $2\beta\cdot|\partial^+_{H_i}S|-|\partial_{H_i}S|\ge0$, or equivalently $|\partial^+_{H_i}S|\ge\frac1{2\beta}|\partial_{H_i}S|$. Assuming that the guarantee of $\textsc{Sparsifier}(2\beta)$ on $H_i$ holds, we conclude that $\partial^+_{\widetilde H_i}S\ne\emptyset$ as well as $\partial^+_{\widetilde H}S\ne\emptyset$, completing the reduction.

\subsection{Dynamic Unweighted Sparsifier}
For the dynamic unweighted sparsifier, the algorithm maintains a sequence of explicit \emph{expander decompositions}  of the underlying undirected graph. We begin with some expander decomposition preliminaries on unweighted, undirected graphs. For a subset of vertices $S\subseteq V$, let $\textbf{\textup{vol}}(S)=\sum_{v\in S}\deg(v)$ denote the sum of vertex degrees in $S$. For a partition $\mathcal P=\{V_1,\ldots,V_k\}$ of $V$ into clusters, let $\partial\mathcal P=\cup_{i=1}^k\partial V_i$ denote the edges whose endpoints belong to different clusters.

\begin{definition}[$\phi$-expander]
An undirected graph is a \emph{$\phi$-expander} if $|\partial S|\ge\phi\cdot\min\{\textbf{\textup{vol}}(S)$, $\textbf{\textup{vol}}(V\setminus S)\}$ for all $S\subseteq V$.
\end{definition}

\begin{definition}[$\phi$-expander decomposition with slack $s$]
A $\phi$-expander decomposition of an undirected graph $G$ with slack $s\ge1$ is a partition $\mathcal P=\{V_1,\ldots,V_k\}$ of $V$ into clusters such that
 \begin{enumerate}
 \item $|\partial\mathcal P|\le\tilde O(\phi m)$, and
 \item Each induced subgraph $G[V_i]$ is a $(\phi/s)$-expander.
 \end{enumerate}
\end{definition}

We are interested in a dynamic expander decomposition algorithm with low \emph{recourse} with respect to the set $\partial\mathcal P$. We say that the algorithm has amortized recourse $\rho$ if for all $k\ge1$, the set $\partial\mathcal P$ changes by at most $\rho\cdot k$ edges after $k$ edge updates to the graph.

\begin{theorem}[Theorem~7.2 of~\cite{goranci2021expander}]\label{thm:expander-decomposition}
Let $0<\phi<1$ be a parameter. There is a randomized algorithm that, given an unweighted graph $G$ undergoing edge insertions and deletions, maintains an explicit $\phi$-expander decomposition with slack $s=2^{-O(\log^{1/2}n)}$ together with the explicit set of edges $\partial\mathcal P$. The algorithm has update time $2^{O(\sqrt{\log n})}/\phi^2$ and amortized recourse $2^{O(\sqrt{\log n})}$ with respect to $\partial\mathcal P$. The algorithm succeeds with high probability against an adaptive adversary.
\end{theorem}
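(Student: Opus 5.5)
Since the statement is quoted from~\cite{goranci2021expander}, the paper will use it as a black box. If I had to prove it, I would combine three ingredients: a static expander decomposition routine, expander pruning to handle deletions, and a multi-level rebuilding schedule to handle insertions and to keep the amortized recourse bounded.

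\textbf{Static decomposition.} I would take a near-linear time static routine that, given a graph with $m'$ edges, outputs a partition $\mathcal P$ with $|\partial\mathcal P|\le\tilde O(\phi m')$ in which every cluster is a $(\phi/n^{o(1)})$-expander. The cut-matching based algorithms give this, running in time $\tilde O(m'/\phi)$ or so.

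\textbf{Deletions inside a cluster.} I would invoke expander pruning. After $k$ deletions inside a $\phi'$-expander cluster, one can maintain a pruned set $P$ with $\textbf{vol}(P)=O(k/\phi')$, so that $G[V_i\setminus P]$ remains a $(\phi'/6)$-expander. The pruned vertices are split off as new clusters, for example as singletons or by a local static re-decomposition. This adds at most $O(k/\phi')\cdot\phi'$-ish boundary edges charged to the deletions. The crude bound on boundary growth is $\textbf{vol}(P)$. To get polylogarithmic-times-slack recourse, I would re-decompose $P$ with the static routine, which turns it into $O(\phi'\textbf{vol}(P))=O(k)$ new boundary edges.

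\textbf{Insertions.} An insertion between two clusters simply enters $\partial\mathcal P$ with recourse $1$. An insertion inside a cluster can destroy expansion, because it raises volumes. My first attempt is to not add it to the cluster's expander certificate at all. Instead I would keep it as a ``pending'' edge and periodically rebuild. For this I would use a hierarchy of $L=\sqrt{\log n}$ levels with branching factor $b=2^{\sqrt{\log n}}$. Level $j$ is recomputed every $b^{j}$ updates on the graph consisting of the edges that changed since its last rebuild, with the parameter $\phi$ adjusted accordingly. A level-$j$ rebuild changes $\partial\mathcal P$ by $\tilde O(\phi\cdot b^{j}\cdot\text{stuff})$, and it is charged to the $b^{j}$ updates it absorbs. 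Summed over $L$ levels, this gives amortized recourse $b\cdot\mathrm{polylog}(n)^{L}=2^{O(\sqrt{\log n})}$. The final partition is the common refinement of the levels. Each level loses a constant or polylog factor in conductance, so the slack becomes $\mathrm{polylog}(n)^{L}=2^{O(\sqrt{\log n})}$. The update time becomes $2^{O(\sqrt{\log n})}/\phi^2$, where the $1/\phi^2$ comes from pruning combined with $1/\phi$-time rebuilds.

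\textbf{Adaptive adversary.} Pruning is deterministic, and I would refresh the randomness at every rebuild, so the adversary's view does not bias future outputs. An alternative is to use a deterministic static routine.

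\textbf{Main obstacle.} I expect the hardest step to be the interplay between levels. A refinement of expander partitions across levels need not be an expander partition, so I must argue that intersecting a level-$j$ cluster with the clusters of finer levels only loses a bounded factor. The idea is to certify expansion via embeddings (flow routings) that are composed level by level. The key point is to balance $L$ and $b$ so that both the slack loss and the recourse are $2^{O(\sqrt{\log n})}$.
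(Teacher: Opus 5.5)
The paper does not prove this theorem. It is imported from Goranci et al.\ and used only as a black box in the dynamic sparsifier, so your first sentence matches the paper exactly. The sketch you offer in its place has a genuine gap, at the step you flag yourself.

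Taking the final partition to be the common refinement of the per-level partitions does not work. Suppose $V_1$ is a cluster at one level (an expander in that level's edges) and $V_2$ is a cluster at another level. Then $G[V_1\cap V_2]$ need not be an expander, or even connected. For instance, $V_1\cap V_2$ may induce only two far-apart edges of $V_1$'s level and nothing else. Every certifying flow path between those two edges, in either cluster, leaves $V_1\cap V_2$, so no composition of embeddings can certify the intersection. Re-cutting the intersections would need a boundary bound you do not have.

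Relatedly, an inserted edge with both endpoints in $V_i$ is by definition an edge of $G[V_i]$, not of $\partial\mathcal P$. So it cannot be parked as ``pending'' unless you either show that $G[V_i]$ remains an expander, or move vertices and pay the recourse. The first option is available, for example, as long as no vertex's degree in the cluster has more than doubled, since then volumes at most double while cuts only grow.

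Keeping the levels as edge-disjoint subgraphs with separate decompositions avoids the intersection problem. That is essentially all the sampling argument in the dynamic sparsification section uses, but it is a different statement from the one quoted.

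The quantitative claims also do not follow:
\begin{itemize}
\item With $L=\sqrt{\log n}$ levels, a polylogarithmic loss per level gives slack $(\log n)^{O(\sqrt{\log n})}=2^{O(\sqrt{\log n}\log\log n)}$, not $2^{O(\sqrt{\log n})}$. Only an $O(1)$ loss per level is affordable.
\item The same problem affects your recourse bound $b\cdot\mathrm{polylog}(n)^{L}$. If the levels are really just summed, it should read $L\cdot b\cdot\mathrm{polylog}(n)$.
\item For deletions, $O(\phi'\,\textbf{\textup{vol}}(P))=O(k)$ is a one-shot bound, but $P$ grows over time. The pruning guarantee controls only the current cut $E(P,V_i\setminus P)$. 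Edges between pieces pruned at different times are a priori bounded only by $\textbf{\textup{vol}}(P)=O(k/\phi')$. Re-decomposing all of $P$ after each deletion instead costs $O(k)$ recourse each time.
\end{itemize}

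So you have not shown recourse independent of $\phi$, and that is exactly what the paper needs. Changes to $\partial\mathcal P$ are fed to the next of the $O(\log^{1/4}n)$ recursion levels, so recourse compounds across levels. With $\phi=2^{-\log^{3/4}n}$, a recourse of order $1/\phi$ would compound to $n^{\Theta(1)}$. Supplying that $\phi$-independent recourse is the real content of the cited theorem, and your sketch does not provide it.
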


We now describe the dynamic sparsification algorithm. Given the graph $H$, the algorithm first adds between $m/n$ and $2m/n$ self-loops to each vertex in $H$. This initial pre-processing has amortized recourse $O(1)$ since it takes $\Omega(m)$ edge updates before the value of $m/n$ changes by a constant factor, at which point $O(m)$ time is spent updating the self-loops. After the addition of self-loops, the graph still has $O(m)$ edges. The algorithm maintains a dynamic expander decomposition (\Cref{thm:expander-decomposition}) with parameter $\phi=2^{-\log^{3/4}n}$ on the graph with self-loops. On the subgraph of $H$ consisting of the edges $\partial\mathcal P$, the algorithm \emph{recursively} performs the same procedure, starting with adding $|\partial\mathcal P|/n$ self-loops to each vertex. Since $|\partial\mathcal P|$ is a factor $2^{\Omega(\log^{3/4}n)}$ smaller, the recursion depth is $t=O(\log^{1/4}n)$. Each level of recursion has amortized recourse $2^{O(\sqrt{\log n})}$, so the overall recourse (and update time) is $2^{O(\log^{3/4}n)}$.

Given the explicitly maintained expander decompositions, we now implement the procedure $\textsc{Sparsifier}(\beta)$. On recursion level $i\ge0$, let $G_i\subseteq H$ be the initial subgraph at level $i\ge0$ of recursion (before the addition of self-loops), and let $m_i$ be the number of edges in $G_i$. Let $\mathcal P_i$ be the expander decomposition at this level, and let $H_i=G_i\setminus \partial\mathcal P_i$ be the subgraph of edges with both endpoints in the same cluster. By construction, the input graph $H$ is the disjoint union of the subgraphs $H_i$. The algorithm independently samples each edge in $H_i$ with probability $\frac{q\beta}{2t}\cdot\frac n{m_i}$, and outputs the sparsifier $\widetilde H$ as the union of all sampled edges (over all $i$).

For the running time, observe that for each $i$, the random sampling in $H_i$ can be done in time proportional to the number of samples: first determine the number of samples $s$, which can be sampled from a binomial distribution, and then randomly sample $s$ edges from the edges in $H_i$. The expected number of samples in $H_i$ is at most $\frac{q\beta}{2t}\cdot n$, so the sparsifier $\widetilde H$ has expected size at most $q\beta n/2$. A simple concentration bound shows that the size is at most $q\beta n$ with high probability, so the running time is also $O(q\beta n)$ with high probability.

It remains to establish the sparsifier guarantee. Consider a cut $S\subseteq V$ with $|\partial_H^+S|\ge\frac1\beta |\partial_HS|$, rewritten as $\beta\cdot |\partial^+_HS|-|\partial_HS|\ge0$. Since $H$ is the disjoint union of subgraphs $H_i$, we can write $|\partial^+_HS|=\sum_i|\partial^+_{H_i}S|$ and similarly for $\partial_HS$. In particular,
\[ \sum_i(\beta\cdot |\partial^+_{H_i}S|-|\partial_{H_i}S|)=\beta\cdot\sum_i|\partial^+_{H_i}S|-\sum_i|\partial_{H_i}S|=\beta\cdot |\partial^+_HS|-|\partial_HS|\ge0 ,\]
so for some $i$, we have $\beta\cdot |\partial^+_{H_i}S|-|\partial_{H_i}S|\ge0$. We can further write $H_i$ as the disjoint union of $\phi$-expanders from the decomposition $\mathcal P_i$, and by a similar argument, there is a $\phi$-expander $X\subseteq H_i$ supported on $U\subseteq V$ with $\beta\cdot|\partial^+_X(S\cap U)|-|\partial_X(S\cap U)|\ge0$.

Our goal is to show that for a fixed $\phi$-expander $X\subseteq H_i$ supported on $U\subseteq V$, with high probability, any cut $R\subseteq U$ with $\beta\cdot|\partial^+_XR|-|\partial_XR|\ge0$ has at least one sampled out-arc in $X$. Taking a union bound over all expanders $X$ (over all levels of recursion) then handles all cuts $S\subseteq V$ in the sparsifier guarantee, completing the proof.

Consider a $\phi$-expander $X\subseteq H_i$ supported on $U\subseteq V$. We establish a ``cut counting'' style union bound by grouping the cuts $R\subseteq U$ with $\beta\cdot|\partial^+_XR|-|\partial_XR|\ge0$ by the value $\min\{|R|,|U\setminus R|\}$. For each integer $k\ge1$, there are at most $2\cdot|U|^k$ many cuts $R\subseteq U$ with $\min\{|R|,|U\setminus R|\}=k$. By the $\phi$-expander guarantee, each such cut satisfies $|\partial_XR|\ge\phi\cdot\min\{\textbf{\textup{vol}}_X(R),\textbf{\textup{vol}}_X(U\setminus R)\}$. Since at least $m_i/n$ self-loops were added to each vertex before the expander decomposition, we have $\min\{\textbf{\textup{vol}}_X(R),\textbf{\textup{vol}}_X(U\setminus R)\}\ge m_i/n\cdot\min\{|R|,|U\setminus R|\}=km_i/n$. In particular, if $\beta\cdot|\partial^+_XR|-|\partial_XR|\ge0$, then $|\partial^+_XR|\ge\frac1\beta|\partial_XR|\ge\frac{k\phi}\beta\cdot\frac{m_i}n$. Each edge in $X$ is sampled with probability $\frac{q\beta}{2t}\cdot\frac n{m_i}$, so the probability of sampling no edge in $\partial^+_XR$ is
\[ \left(1-\frac{q\beta}{2t}\cdot\frac n{m_i}\right)^{|\partial^+_XR|}\le\exp\left(-\frac{q\beta}{2t}\cdot\frac n{m_i}\cdot|\partial^+_XR|\right)\le\exp\left(-\frac{q\beta}{2t}\cdot\frac n{m_i}\cdot\frac{k\phi}\beta\cdot\frac{m_i}n\right)=\exp\left(-\frac{kq\phi}{2t}\right) .\]
Setting $q=2ct/\phi\cdot\ln n$ for an arbitrary constant $c>0$, this probability is at most $\exp(-ck\ln n)=n^{-ck}$. Taking a union bound over all cuts $R\subseteq U$ with $\min\{|R|,|U\setminus R|\}=k$, the probability of failure is at most $2\cdot|U|^k\cdot n^{-ck}\le 2n^{-(c-1)k}$ for this value of $k$. Finally, taking a union bound over all $k\ge1$, the overall probability of failure is at most $\sum_{k\ge1}2n^{-(c-1)k}\le O(n^{-(c-1)})$, as desired.

\section{Conclusion}

We conclude with some future directions. The first question is whether our weight function $w(u,v)=\frac1{\max\{\by(v)-\by(u),\,0\}+1}$ is the ``right'' function. A more natural function is $\frac1{\by(v)-\by(u)+1}$, which induces a ``barrier'' that enforces $\by(v)\ge\by(u)-1$ in the same way the push-relabel algorithm requires $d(v)\ge d(u)-1$ on arcs $(u,v)$ in the residual graph~\cite{goldberg1988new}. However, our algorithm may flip an ``upward'' arc $(v,u)$ with $\by(v)\ll\by(u)-1$, causing the new weight $\frac1{\by(v)-\by(u)+1}$ to be undefined. On the other hand, the push-relabel algorithm never sends flow along upward arcs, so perhaps the algorithm should only flip downward arcs. We leave as an open question whether our algorithm can be adapted to behave like a ``continuous'' push-relabel algorithm.

Another question is whether the balancing weights technique can be applied to other directed graph problems. Balanced directed graphs behave similarly to undirected graphs, so the method can be viewed as a way to apply the powerful techniques for undirected graphs to the more difficult setting of directed graphs. Indeed, the recent interior point method-based flow algorithms~\cite{chen2025maximum,van2024almost} all adopt this strategy to great success: each iteration reduces to a problem on an undirected graph, which is solved quickly by dynamic data structures.

Finally, one limit to the current balancing weights approach is the requirement that the graph is strongly connected. We leave as an open question whether the technique can be extended to general directed graphs. One difficulty in handling cuts with no arcs in one direction is that the arcs in the other direction may have their weights approaching $0$, which means that $\by(v)-\by(u)$ becomes arbitrarily large. One possible approach is to employ a ``softer'' weight function to control the potential differences as the weights tend to $0$.

\subsection*{Acknowledgement}

The author thanks Satish Rao for many insightful discussions.

\bibliographystyle{alpha}
\bibliography{ref}

\end{document}